\documentclass[reqno,10pt]{article}
\usepackage{amssymb,amsthm,amsmath,amsfonts}
\usepackage{epsfig}

\bibliographystyle{siam}

%
%
%
\theoremstyle{plain}
\begingroup
 
\newtheorem{thm}{Theorem}[section] 

\newtheorem{lemma}[thm]{Lemma}

\endgroup
%
%
%
%
\theoremstyle{definition}
\newtheorem{defn}{Definition}[section]
\newtheorem{rem}[defn]{Remark}

%
%
%
%
\theoremstyle{remark}

 
\numberwithin{equation}{section}
\numberwithin{figure}{section}

%
%
%
\begin{document}

\title{Perturbation theorems for Hele-Shaw flows and their applications}
\author{Yu-Lin Lin\textsuperscript{1}}

\date{\today}

\maketitle

\begin{abstract}
In this work, we give a perturbation theorem for strong polynomial solutions to the zero surface tension Hele-Shaw equation driven by injection or suction, so called the Polubarinova-Galin equation. This theorem enables us to explore properties of solutions with initial functions close to but are not polynomial. Applications of this theorem are given in the suction or injection case. In the former case, we show that if the initial domain is close to a disk, most of fluid will be sucked before the strong solution blows up. In the later case, we obtain precise large-time rescaling behaviors for large data to Hele-Shaw flows in terms of invariant Richardson complex moments. This rescaling behavior result generalizes a recent result regarding large-time rescaling behavior for small data in terms of moments. As a byproduct of a theorem in this paper, a short proof of existence and uniqueness of strong solutions to the Polubarinova-Galin equation is given. 
\end{abstract}
\noindent
Keywords:  Hele-Shaw flows, starlike function, rescaling behavior.

\footnotetext[1]
{Institute of Mathematics, Academia Sinica, Nankang, Taipei, Taiwan 11529 R. O. C.\\
Email: \tt{yulin@math.sinica.edu.tw }}
\section{Introduction}
This paper deals with classical zero surface tension (ZST) Hele-Shaw flows. The driving mechanism, injection or suction with a constant rate $2\pi$ or $-2\pi$ at the origin, produces a family of domains $\{\Omega(t)\}_{t\geq0}$. In two dimensions, Galin and Polubarinova-Kochina reformulated the planar model of Hele-Shaw flows by describing the domains $\{\Omega(t)\}$ by a family of conformal mappings $\{f(\xi,t)\}$ where $f(\xi,t): D\rightarrow \Omega(t)$ and $f(0,t)=0, f^{'}(0,t)>0$. Here we set 
\[f_{t}(\xi,t)=\frac{\partial}{\partial t}f(\xi,t),\quad f^{'}(\xi,t)=\frac{\partial}{\partial\xi}f(\xi,t), \quad D=D_{1}(0),\quad  D_{r}=D_{r}(0)\] 
where $D_{r}(z_{0})=\{x\in R^2:|x-z_{0}|<r\}$. Equations for $f(\xi,t)$, so called the Polubarinova-Galin equations,  are derived under this reformulation and they are expressed  in the case of injection or suction  respectively as follows:
 \begin{equation}
\label{PG1}
Re\left[f_{t}(\xi,t)\overline{f^{'}(\xi,t)\xi}\right]=1, \quad\xi\in\partial D
\end{equation}
and 
 \begin{equation}
\label{PGS}
Re\left[f_{t}(\xi,t)\overline{f^{'}(\xi,t)\xi}\right]=-1, \quad\xi\in\partial D.
\end{equation}
A solution to equation $(\ref{PG1})$ or $(\ref{PGS})$ is said to be a strong solution for $t\in [0,b)$ if $f(\xi,t)$ is univalent and analytic in a neighborhood of $\overline{D}$, $f(0,t)=0$, $f^{'}(0,t)>0$ and $f(\xi,t)$ is continously differentiable in $t\in [0,b)$.\par
 Denote 
 \begin{align}
 H(E)&=\left\{f\mid \mbox {$f(\xi)$ is analytic in $E$}\right\},\notag\\
 O(E)&=\left\{f\in H(E)\mid \mbox {$f(\xi)$ is univalent, $f(0)=0, f^{'}(0)>0$}\right\}.\notag
 \end{align}
The short-time well-posedness of (\ref{PG1}) has been thoroughly explored. In Reissig and von Wolfersdorf~\cite{reissig}, the authors prove the existence and uniqueness of a short-time strong solution in $O(\overline{D})$ if the initial function is in $O(\overline{D})$. In Gustafsson~\cite{gustaf1}, the author proves that a strong solution to (\ref{PG1}) is a family of  polynomials of degree $k_{0}$ if its initial function in $O(\overline{D})$ is also a polynomial of degree $k_{0}$. These results all can be applied to (\ref{PGS}) as well even though the authors don't comment on that.\par 
In this paper, we first prove a perturbation theorem for the strong polynomial solutions to the Polubarinova-Galin equation~(\ref{PG1}) or (\ref{PGS}). Many properties for strong polynomial solutions are thoroughly known. This theorem enables us to explore the properties of evolution of perturbed polynomials which are nonpolynomial. We obtain two applications of this theorem in the suction and injection case. \par
We first state this perturbation theorem. We define the following norms to describe the evolution of solutions:
\[
\left| \sum_{i=0}^{\infty}a_{i}\xi^{i}\right|_{M} =\sum_{i=0}^{\infty}\left|a_{i}\right|, \quad\left|\sum_{i=0}^{\infty}a_{i}\xi^{i}\right|_{M(r)}=\sum_{i=0}^{\infty}\left| a_{i}r^{i}\right|.
\]
Also, we define the following norm to describe the small perturbation:
 \[\|v\|_{\rho,n}=\sum_{j=1}^{\infty}\left| v_{j}\right|\rho^{j}j^{\frac{1}{2}+n}, \quad v=\sum_{j=1}^{\infty}v_{j}\xi^{j}.\]   
The perturbation theorem, Theorem~\ref{twins}, describes the evolution of small perturbation of polynomials and is stated as follows:\\
\begin{thm}\label{twins}
Given a strong degree $k_{0}$ polynomial solution $f_{k_{0}}(\xi,t)$ to (\ref{PG1}) (or (\ref{PGS})), and that $f_{k_{0}}(\xi,t)\in O(\overline{D_{r}})$ at $t\in [0,T_{0}]$ for some $T_{0}>0$ and $r>1$. Then for $\epsilon>0, k\in N$ and $1<r^{'}<r$, there exist $\delta(f_{k_{0}}, T_{0}, \epsilon, k, r')>0$ and $\rho(f_{k_{0}}, T_{0}, \epsilon, k, r')>1$ such that if 
$\left\|f(\cdot,0)-f_{k_{0}}(\cdot,0)\right\|_{\rho,k}<\delta$ where $f(0,0)=0$ and $f^{'}(0,0)>0$,  then the strong solution to (\ref{PG1}) (or (\ref{PGS})) $f(\xi,t)$ satisfies
\[f(\xi,t)\in O(\overline{D_{r'}})\cap C^{1}([0,T_{0}],H(D_{r})),\]
and for $0\leq n\leq k$, $0\leq t\leq T_{0},$
\[\left| f_{k_{0}}^{(n)}(\cdot,t)-f^{(n)}(\cdot,t)\right|_{M(r)}<\epsilon.\]
\end{thm}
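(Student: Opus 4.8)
The plan is to set up the problem as a perturbation of the polynomial flow in a suitable Banach space of analytic functions on the disk $D_r$, and then run a Gronwall/continuation argument to propagate closeness. First I would write $f(\xi,t) = f_{k_0}(\xi,t) + v(\xi,t)$, substitute into (\ref{PG1}) (resp. (\ref{PGS})), and subtract the equation satisfied by $f_{k_0}$. Using $\mathrm{Re}[f_t \overline{f'\xi}] = 1$ for both $f$ and $f_{k_0}$ on $\partial D$, the difference $v$ satisfies a nonlinear equation of the form $\mathrm{Re}[v_t \overline{f_{k_0}'\xi}] = -\mathrm{Re}[f_{k_0,t}\overline{v'\xi}] - \mathrm{Re}[v_t\overline{v'\xi}]$ on $\partial D$. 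Because $f_{k_0}(\cdot,t) \in O(\overline{D_r})$, the coefficient $f_{k_0}'\xi$ is bounded away from zero on $\overline D$, so the linear operator $v_t \mapsto \mathrm{Re}[v_t\overline{f_{k_0}'\xi}]$ on $\partial D$ can be inverted (this is essentially the same Hilbert-transform / Riemann--Hilbert inversion used in Reissig--von Wolfersdorf), expressing $v_t$ as an analytic function of $\xi$ determined by $v$ and $f_{k_0}$. This casts the evolution of $v$ as an ODE $v_t = \mathcal{F}(v,t)$ in a space like $\{v : \|v\|_{\rho,k} < \infty\}$ or in $H(D_r)$ with the $|\cdot|_{M(r)}$-type norm, with $\mathcal{F}$ locally Lipschitz and $\mathcal{F}(0,t)=0$.

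Next I would establish the quantitative bounds. The key estimate is that $\mathcal{F}$ maps a small ball around $0$ into itself with a Lipschitz constant controlled by $\sup_{[0,T_0]}\|f_{k_0}(\cdot,t)\|$ in the relevant norm, plus lower-order terms in $\|v\|$. Crucially, I need the inversion of the linear part to not lose regularity: the gain comes from choosing $r' < r$ and working with the weights $\rho^j j^{1/2+n}$, which absorb the mild derivative loss in the singular-integral inversion (the $j^{1/2}$ factor is exactly what one expects from an $L^2$-type bound on the conjugation/Hilbert transform, and the extra $j^n$ handles $n$ derivatives). A standard Picard iteration or contraction-mapping argument on $C^1([0,T_0], H(D_{r'}))$ then gives a unique solution $v$ with $\|v(\cdot,t)\|$ small for all $t \in [0,T_0]$, provided $\|v(\cdot,0)\|_{\rho,k} < \delta$ with $\delta$ chosen depending on $f_{k_0}, T_0, \epsilon, k, r'$. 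Univalence of $f = f_{k_0}+v$ on $\overline{D_{r'}}$ follows because $f_{k_0}(\cdot,t)$ is univalent on $\overline{D_r} \supset \overline{D_{r'}}$ with a definite lower bound on $|f_{k_0}(\xi_1,t)-f_{k_0}(\xi_2,t)|/|\xi_1-\xi_2|$, and a sufficiently small perturbation in a norm controlling $\sup_{\overline{D_{r'}}}|v'|$ preserves univalence; this is where smallness of $\delta$ is again used. The derivative bound $|f_{k_0}^{(n)}(\cdot,t) - f^{(n)}(\cdot,t)|_{M(r)} < \epsilon$ for $0\le n\le k$ is read off from $\|v(\cdot,t)\|_{\rho,k} < \delta'$ together with the elementary inequality relating $|\cdot|_{M(r)}$ of the $n$-th derivative to the weighted norm with weight $\rho > r$ and exponent $k \ge n$.

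I should also confirm the solution exists on the full interval $[0,T_0]$ rather than only a short time: this is the continuation step. Let $T^* \le T_0$ be the maximal time for which the solution stays in the small ball where all estimates are valid; if $T^* < T_0$, the a priori Gronwall bound $\|v(\cdot,t)\| \le \|v(\cdot,0)\| e^{Ct}$ (with $C$ from the Lipschitz estimate, uniform on $[0,T_0]$ because $f_{k_0}$ is) shows $\|v(\cdot,T^*)\|$ is still strictly inside the ball, and also keeps $f$ univalent on $\overline{D_{r'}}$ and analytic on $D_r$, so the local theory extends the solution past $T^*$, a contradiction. Hence $T^* = T_0$. Finally, uniqueness of the strong solution to (\ref{PG1}) or (\ref{PGS}) with the given initial data (which the paper remarks is a byproduct) guarantees that the $f$ produced this way is \emph{the} strong solution, so the theorem's conclusions hold for it.

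The main obstacle I anticipate is making the linear inversion $v_t \mapsto \mathrm{Re}[v_t\overline{f_{k_0}'\xi}]$ precise with norms that close the loop: one must show the inverse operator is bounded from the space of real functions on $\partial D$ (with appropriate Fourier weights) back into analytic functions on $D_r$ with at most the loss absorbed by the $\rho^j j^{1/2+n}$ weights, \emph{uniformly} in $t \in [0,T_0]$, using only that $f_{k_0}(\cdot,t)$ is univalent on $\overline{D_r}$ with $r > 1$. Getting the dependence of $\delta$ and $\rho$ on $f_{k_0}, T_0, \epsilon, k, r'$ to be consistent (in particular choosing $\rho$ between $r'$ and $r$, then $\delta$ small relative to $\rho, \epsilon$, the univalence modulus of $f_{k_0}$, and $e^{CT_0}$) is the bookkeeping heart of the argument; the rest is standard ODE-in-Banach-space machinery combined with the classical Hele-Shaw linearization.
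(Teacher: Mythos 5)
Your outline takes a genuinely different route from the paper (direct linearization around $f_{k_0}$, Riemann--Hilbert inversion, Picard/Gronwall in a fixed weighted space), but it has a gap at exactly the point you flag as "the main obstacle," and the fix you sketch does not work as stated. The right-hand side of the evolution, after inverting $v_t\mapsto \mathrm{Re}[v_t\overline{f'\xi}]$, is a Cauchy/Poisson integral of $1/(f'(z,t)\overline{f'}(1/z,t))$ over a circle $\partial D_{r}$ (cf.\ (\ref{enlarge1})): it consumes one derivative of $v$ evaluated up to $|z|=r$ and returns a function analytic only \emph{inside} $D_r$, with Taylor coefficients decaying like $r^{-j}$, not $\rho^{-j}$. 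So the map $\mathcal{F}$ does not send the ball of $\{v:\|v\|_{\rho,k}<\infty\}$ (or of $H(D_{r'})$ with an $M$-type norm) back into itself with a $t$-uniform Lipschitz constant; this is a loss-of-analyticity-domain (Cauchy--Kovalevskaya) situation, and a "standard contraction on $C^1([0,T_0],H(D_{r'}))$" plus the continuation bound $\|v(t)\|\le\|v(0)\|e^{Ct}$ simply does not close. Relatedly, your bookkeeping remark that one chooses "$\rho$ between $r'$ and $r$" is wrong in an essential way: if $\rho<r$, smallness of $\|f(\cdot,0)-f_{k_0}(\cdot,0)\|_{\rho,k}$ does not even control $|f^{(n)}(\cdot,0)-f_{k_0}^{(n)}(\cdot,0)|_{M(r)}$ at $t=0$, let alone later. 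In the theorem, $\rho$ must exceed $r$ and, crucially, must grow with $T_0$.

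For contrast, the paper resolves exactly this difficulty by avoiding an abstract fixed-point argument altogether: it truncates the perturbation and works with the strong* \emph{polynomial} solutions $g_n$ (whose existence is Gustafsson's result, and whose blow-up is controlled by Lemma~\ref{Lemma2.4}), proves a coefficientwise $L^2$ Gronwall estimate for differences of polynomial solutions (Lemma~\ref{estimate11}) whose constant is degree-dependent, of size $e^{C n k_0 t}$ for degree-$n$ data, and then beats this exponential growth by the geometric weight $\rho^{-n}$ carried by the $n$-th tail of the initial perturbation; this is why the existence/closeness time in Theorem~\ref{Main Lemma} is $\min\{\frac{1}{Ck_0}\ln(\rho/r),t_1\}$ and why in the proof of Theorem~\ref{twins} one takes $\rho$ large enough that $\frac{1}{Ck_0}\ln(\rho/r)\ge T_0$, with $\delta$ then chosen against $\epsilon$ and the univalence threshold of Lemma~\ref{univalence}. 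If you want to salvage your linearization route, you would need to replace the fixed-space contraction by an argument in a scale of analytic spaces (Nishida/Nirenberg-type abstract Cauchy--Kovalevskaya) \emph{together} with a quantitative mechanism, analogous to the paper's choice of $\rho=\rho(T_0)$, showing the solution and the closeness estimate survive on all of $[0,T_0]$ rather than on a short scale-dependent interval; as written, the proposal asserts rather than proves this, so the main conclusion (control on the full interval $[0,T_0]$, hence also univalence on $\overline{D_{r'}}$ there) is not established.
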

The applications of this theorem and related past results are stated briefly in 1.1 and 1.2 as the following:
\par
\textbf{1.1} Here we assume the driving mechanism is suction.  It has been known that strong solutions to (\ref{PGS}) must blow up before the fluid is sucked out except for the degree $1$ polynomial solutions. However, by taking $k_{0}=1$ in Theorem~\ref{twins}, we prove that if the initial domain is close to a disk, most of fluid is sucked before the strong solution to (\ref{PGS}) blows up.\par
\textbf{1.2} Now we assume the driving mechanism is injection. In Sakai~\cite{sakai} and Gustafsson and Sakai \cite{sakai2}, the authors consider solutions of weak formulation and investigate the radius and curvature of two-dimensional moving domains respectively.  For an arbitrary initial shape the moving domain its asymptote is expanding disks. Recently, progress regarding this asymptotic behavior has been made by investigating it in terms of conserved quantities, so called Richardson complex moments; see Richardson~\cite{richardson}. In Vondenhoff~\cite{vondenhoff}, by restricting multi-dimensional initial domains to be close to balls, the author gives a rescaling behavior of the moving boundaries in terms of conserved moments.  In this paper, we aim to generalize the former result in two-dimensions by assuming a larger set of initial domains.\par
It has been known that there is a general class of polynomials which can give rise to global strong polynomial solutions to (\ref{PG1}) and the corresponding initial domains can be quite different from disks; for examples, starlike polynomials (eg. $\xi+\frac{2}{5}\xi^2$ and $\frac{\xi}{1.1}-\frac{15}{14}(\frac{\xi}{1.1})^2+\frac{4}{7}(\frac{\xi}{1.1})^3-\frac{1}{7}(\frac{\xi}{1.1})^4$); see Gustafsson, Prokhorov and Vasil'ev~\cite{gustaf2}. An arbitrary global strong degree $k_{0}$ polynomial solution to (\ref{PG1}), called $f_{k_{0}}(\xi,t)$, can have its  rescaling behaviors precisely described in terms of moments; see  Lin~\cite{Lin}. In this paper, as an application of Theorem~\ref{twins}, we show that small perturbation of $f_{k_{0}}(\xi,0)$, called $f(\xi,0)$, can give rise to a global strong solution $f(\xi,t)$ and a rescaling behavior of the corresponding moving domains, similar to that stated in Vondenhoff~\cite{vondenhoff},  is given in terms of moments as well. We can deduce  the case that initial domains are small perturbation of disks from this result by letting $k_{0}=1$. Therefore, this result generalizes the result in Vondenhoff~\cite{vondenhoff}. Lin~\cite{Lin}, Vondenhoff~\cite{vondenhoff} and this paper consider different sets of initial data and the rescaling behavior in Lin~\cite{Lin} is different from that in Vondenhoff~\cite{vondenhoff} and this paper. However, geometrically, these rescaling behaviors in the three work all imply that by rescaling the corresponding moving domain $\Omega(t), t\geq 0$ to be a domain $\Omega'(t)$ with area $\pi$, the radius and curvature of $\partial\Omega'(t)$ decay to $1$ algebraically and the decay is faster if lower moments vanish.\par
The sketch of proof of  this result is as the following: We first apply Theorem~\ref{twins} and prove the existence of a locally-in-time  strong solution $\{f(\xi,t)\}_{0\leq t\leq T_{0}}$ where $f(\xi,T_{0})$ is strongly starlike and $f(D,T_{0})$ is a small perturbation of a disk, even though $f(\xi,0)$ can be nonstarlike and $f(D,0)$ is far from a disk. Since starlikeness is a sufficient condition for an initial function to give rise to a global strong solution as shown in Gustafsson, Prokhorov and Vasil'ev~\cite{gustaf2} and since large-time rescaling behavior for evolution of  perturbed disks is shown in Vondenhoff~\cite{vondenhoff} in terms of moments, the solution $f(\xi,t)$ must be global and a rescaling behavior is given in terms of moments as well. 
\par
 The structure of this paper is as follows.  In \textbf{Section~\ref{sec4}}, we prove Theorem~\ref{twins}. In \textbf{Section~\ref{sec9}}, the application of Theorem~\ref{twins} in the suction case is given. In \textbf{Section~\ref{sec5}}, the application of Theorem~\ref{twins} in the injection case is given. As a byproduct of a theorem in this paper, a short proof of existence and uniqueness of strong solutions to (\ref{PG1}) is given in \textbf{Section~\ref{sec6}}.

\section{Proofs of Theorem~\ref{twins}}
\label{sec4}
The proof of the perturbation theorem in the suction case  is almost the same as the proof in the injection case. Therefore, we will just provide the proof of the theorem in the case of injection~(\ref{PG1}). \par
As in Gustafsson~\cite{gustaf1}, a reformulation of  the Polubarinova-Galin equation (\ref{PG1})  is expressed:
\begin{equation}
\label{PG2}
f_{t}=\xi f^{'}P\left[\frac{1}{\mid f^{'}\mid^2}\right],\quad\xi\in D
\end{equation}
where $P$ denotes the Poisson kernel  which defines the analytic function in the unit disk
\begin{equation}
\label{poisson100}
P\left[ g \right](\xi) =  \frac{1}{2\pi i}\int_{\partial D}g(z)\frac{z+\xi}{z-\xi}\frac{dz}{z}, \quad \xi \in D,
\end{equation}
from boundary data $g$ on $\partial D$.  In the mathematical treatment of (\ref{PG2}) it makes no difference if $f(\xi,t)$ is univalent in $\overline{D}$ or merely locally univalent in $\overline{D}$; see Gustafsson~\cite{gustaf1}. To make a distinction, we denote 
\[
         \omega(E)=\left\{f\in H(E)\mid \mbox{$f$ is locally univalent in $E$, $f(0)=0$ and $ f^{'}(0)>0$}\right\}\]
and  define a solution to be a strong* solution to (\ref{PG2}) as follows:\\

\begin{defn}
A solution $f(\xi,t)\in\omega(\overline{D})$ is a strong* solution to (\ref{PG2}) for $0\leq t< b$ if $f(\xi,t)$ is continuously differentiable with respect to $t\in [0,b)$ and satisfies (\ref{PG2}).
\end{defn} 
An univalent strong* solution $f(\xi,t)$ to $(\ref{PG2})$ must be a strong solution to the Polubarinova-Galin equation (\ref{PG1}).\par
In subsection \ref{aaa}, we aim to prove a perturbation theorem for strong* polynomial solutions to (\ref{PG2}), Theorem~\ref{Main Lemma}. In subsection \ref{univalentthm}, we show that Theorem~\ref{twins} follows directly from Theorem~\ref{Main Lemma}.
\subsection{A perturbation theorem for strong* polynomial solutions}
\label{aaa}
We start with lemmas before proving the perturbation theorem for strong* polynomial solutions to (\ref{PG2}).
\begin{lemma}
\label{Lemma2.2}
For $1<p<\infty$, there exists $C_{p}>0$ such that
\[\left\|P\left[g\right]\right\|_{L^p([0,2\pi])}\leq C_{p}\|g\|_{L^p([0,2\pi])}\]
 for $g$ which  is  holomorphic in a neighborhood of $\partial D$ and is also a real function on $\partial D$.
 \end{lemma}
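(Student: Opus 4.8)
The plan is to identify, for real boundary data, the operator $P$ with the passage from a real function on $\partial D$ to the analytic completion of its harmonic extension, and then to reduce the inequality to the classical M.\ Riesz theorem on the $L^p$-boundedness of the conjugate-function operator on the circle.

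First I would parametrize $\partial D$ by $z=e^{i\theta}$, so that $\frac{dz}{z}=i\,d\theta$ and
\[
P[g](\xi)=\frac{1}{2\pi}\int_0^{2\pi}g(e^{i\theta})\,\frac{e^{i\theta}+\xi}{e^{i\theta}-\xi}\,d\theta,\qquad \xi\in D .
\]
Since $g$ is holomorphic in a neighborhood of $\partial D$, on $\partial D$ it has a Laurent expansion $g(e^{i\theta})=\sum_{k\in\mathbb Z}c_k e^{ik\theta}$ converging in an annulus $1-\varepsilon<|z|<1+\varepsilon$, and reality of $g$ on $\partial D$ forces $c_{-k}=\overline{c_k}$. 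Expanding $\frac{e^{i\theta}+\xi}{e^{i\theta}-\xi}=1+2\sum_{m\ge 1}\xi^m e^{-im\theta}$ and integrating term by term gives
\[
P[g](\xi)=c_0+2\sum_{k\ge 1}c_k\xi^{k},
\]
a series that in fact converges for $|\xi|<1+\varepsilon$; hence $P[g]$ is holomorphic in a neighborhood of $\overline{D}$, its boundary trace is unambiguous, and $\|P[g]\|_{L^p([0,2\pi])}$ simply means $\big(\int_0^{2\pi}|P[g](e^{i\theta})|^p\,d\theta\big)^{1/p}$.

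Comparing Fourier series, one reads off that on $\partial D$
\[
P[g]=g+i\tilde g,\qquad \widehat{\tilde g}(k)=-i\operatorname{sgn}(k)\,\widehat g(k),
\]
so $\tilde g$ is the circular Hilbert transform (the harmonic conjugate of $g$ normalized by $\widehat{\tilde g}(0)=0$). The lemma is then immediate from the triangle inequality and the M.\ Riesz theorem, which gives $A_p>0$ with $\|\tilde g\|_{L^p([0,2\pi])}\le A_p\|g\|_{L^p([0,2\pi])}$ for $1<p<\infty$: one takes $C_p=1+A_p$. Equivalently, writing $P[g]=2S_+g-\widehat g(0)$ with $S_+$ the Riesz projection onto nonnegative frequencies, one bounds $S_+$ on $L^p$ and estimates $|\widehat g(0)|\le(2\pi)^{-1/p}\|g\|_{L^p}$.

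There is no real obstacle here: the statement is classical. The only care needed is the bookkeeping of the factor $2$ and the constant term in the identity $P[g]=g+i\tilde g$, the observation that the hypothesis on $g$ makes the boundary values of $P[g]$ (hence the norm in the statement) well defined with no nontangential-limit argument, and the choice whether to invoke $L^p$-boundedness of the conjugate function/Riesz projection as a cited theorem or to reproduce its standard proof; I would cite it.
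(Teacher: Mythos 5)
Your proposal is correct and takes essentially the same route as the paper: both reduce the statement to the classical M.\ Riesz theorem on $L^p$-boundedness of harmonic conjugation (the paper simply notes that $g$ has a harmonic extension and cites Theorem 17.26 in Rudin, which is exactly that theorem). Your Fourier-series identification $P[g]=g+i\tilde g$ is just a more explicit rendering of the same reduction.
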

\begin{proof}
There exists $u$ which is harmonic in $D$, continuous in $\overline{D}$, and $u=g$ on $\partial D$. Therefore, by Theorem 17.26 in Rudin~\cite{rudin}, it is shown that for $1<p<\infty$, there exists $C_{p}>0$ such that 
\[\left\|P\left[u\right]\right\|_{L^p([0,2\pi])}\leq C_{p}\|u\|_{L^p([0,2\pi])},\]
which means
\[\left\|P\left[g\right]\right\|_{L^p([0,2\pi])}\leq C_{p}\|g\|_{L^p([0,2\pi])}.\]
\end{proof}
In the proof of the perturbation theorem for strong* polynomial solutions, we use iterative methods. In each iteration, we need to calculate the difference of two polynomial univalent functions $h_{1}$ and $ h_{2}$ which satisfy the assumption of Lemma~\ref{estimate11}. Inequality (\ref{DDD}) enables us to estimate $\|h_{1}^{'}-h_{2}^{'}\|_{L^{2}([0,2\pi])}$ locally in time when $h_{1}$ and $h_{2}$ are both polynomial as shown in the proof of Theorem~\ref{Main Lemma}.
\begin{lemma}
\label{estimate11}
Let $g(\xi,t)\in \omega(\overline{D_{r}})\cap C^{1}([0,t_{1}],H(\overline{D_{r}}))$ be a strong* solution to (\ref{PG2}) and $0<l<1$. There exists $C(g, t_{1}, r, l)>0$ such that, if  $h_{1}(z,t), h_{2}(z,t)$ $\in\omega(\overline{D_{r}})\cap C^{1}([0,t_{h}],H(\overline{D_{r}}))$ are two strong* solutions to (\ref{PG2})  where $0<t_{h}\leq t_{1}$ and 
\begin{equation}
\label{gg}
\max_{([0,t_{h}])}\left| h_{i}^{'}(\cdot,t)-g^{'}(\cdot,t)\right|_{M(r)}\leq l\min_{(\overline{D_{r}},[0,t_{1}])}\left| g^{'}\right|,\quad 1\leq i\leq 2,
\end{equation}
then we have
\begin{equation}
\label{iterative}
\left\|\frac{\partial}{\partial t}\left[h_{1}-h_{2}\right]\right\|_{L^{2}([0,2\pi])}\leq C\left\|h_{1}^{'}-h_{2}^{'}\right\|_{L^{2}([0,2\pi])},\quad 0\leq t\leq t_{h}.
\end{equation}
Furthermore, if $h_{1}, h_{2}$ are both polynomials of degree $\leq n$, then for $0\leq t\leq t_{h}$
\begin{equation}
\label{DDD}
\left\|h_{1}^{'}(\cdot,t)-h_{2}^{'}(\cdot,t)\right\|_{L^{2}([0,2\pi])}^2\leq e^{2C(n)t}\left\|h_{1}^{'}(\cdot,0)-h_{2}^{'}(\cdot,0)\right\|_{L^{2}([0,2\pi])}^2.
\end{equation}
\end{lemma}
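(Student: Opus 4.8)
The two inequalities both flow from the reformulated equation~(\ref{PG2})---satisfied by $g$, $h_1$ and $h_2$ alike---together with the uniform control on $h_i'$ supplied by~(\ref{gg}) and the $L^p$-boundedness of $P$ from Lemma~\ref{Lemma2.2}. First I would record what~(\ref{gg}) buys. Set $m=\min_{\overline{D_r}\times[0,t_1]}|g'|>0$ and $M=\max_{\overline{D_r}\times[0,t_1]}|g'|$. Since $|\phi(\xi)|\le|\phi|_{M(r)}$ for $\xi\in\overline{D_r}$ and any $\phi\in H(\overline{D_r})$, hypothesis~(\ref{gg}) forces $(1-l)m\le|h_i'|\le M+lm=:M'$ on all of $\overline{D_r}\times[0,t_h]$; in particular $h_i'$ is zero-free on $\overline{D_r}$. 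Hence its Schwarz reflection $(h_i')^{\ast}(\xi):=\overline{h_i'(1/\bar\xi)}$ is holomorphic and zero-free for $|\xi|>1/r$, the function $|h_i'|^2$ extends holomorphically (and without zeros) across $\partial D$ to $h_i'\,(h_i')^{\ast}$ on the annulus $\{1/r<|\xi|<r\}$, and on the smaller annulus $\{1/\sqrt r<|\xi|<\sqrt r\}$ this extension stays in $[(1-l)^2m^2,(M')^2]$. I would also use the elementary fact that if $\phi$ is holomorphic on $\{1/R<|\xi|<R\}$ with $|\phi|\le K$ there, then (bounding its Laurent coefficients $c_n$ by $|c_n|\le KR^{-|n|}$ and using $P[\xi^n]=2\xi^n$ for $n\ge1$, $P[1]=1$, $P[\xi^n]=0$ for $n\le-1$, read off from~(\ref{poisson100})) the function $P[\phi]$ extends holomorphically to $\{|\xi|<R\}$ with $\|P[\phi]\|_{L^\infty(\overline{D})}\le K\,\frac{R+1}{R-1}$.

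To prove~(\ref{iterative}), I would subtract the two copies of~(\ref{PG2}), restrict to $\partial D$ (all functions involved are continuous up to $\partial D$ by the above), and add and subtract $\xi h_2'\,P[1/|h_1'|^2]$ to get
\[
\partial_t(h_1-h_2)=\xi\,(h_1'-h_2')\,P\!\left[\frac1{|h_1'|^2}\right]+\xi\,h_2'\,P\!\left[\frac{|h_2'|^2-|h_1'|^2}{|h_1'|^2|h_2'|^2}\right].
\]
By the first paragraph $1/|h_1'|^2$ extends holomorphically to $\{1/\sqrt r<|\xi|<\sqrt r\}$ with bound $(1-l)^{-2}m^{-2}$, so $\|P[1/|h_1'|^2]\|_{L^\infty(\partial D)}$ is bounded by a constant depending only on $g,t_1,r,l$, and the first summand is $\le C\,\|h_1'-h_2'\|_{L^2([0,2\pi])}$. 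In the second summand the bracketed function is real on $\partial D$, extends holomorphically across $\partial D$ through $|h_i'|^2=h_i'(h_i')^{\ast}$, and obeys the pointwise bound $\bigl||h_2'|^2-|h_1'|^2\bigr|\le(|h_1'|+|h_2'|)\,|h_1'-h_2'|\le C\,|h_1'-h_2'|$; hence Lemma~\ref{Lemma2.2} with $p=2$ gives $\|P[\cdots]\|_{L^2([0,2\pi])}\le C\,\|h_1'-h_2'\|_{L^2([0,2\pi])}$, and multiplying by $|\xi h_2'|\le M'$ controls the second summand as well. Summing yields~(\ref{iterative}) with a constant $C=C(g,t_1,r,l)$.

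For~(\ref{DDD}), assume $h_1,h_2$ are polynomials of degree $\le n$ and set $w=h_1-h_2$; then $w$ is a polynomial of degree $\le n$ with $w(0,t)=0$ (because $h_i\in\omega(\overline{D_r})$). Writing $w=\sum_{j=1}^{n}w_j(t)\xi^j$ and using Parseval's identity on $\partial D$ gives $\|\partial_t w'\|_{L^2([0,2\pi])}^2=2\pi\sum_{j\le n}j^2|\dot w_j|^2\le n^2\cdot2\pi\sum_{j\le n}|\dot w_j|^2=n^2\|\partial_t w\|_{L^2([0,2\pi])}^2$, whence~(\ref{iterative}) yields $\|\partial_t w'\|_{L^2([0,2\pi])}\le nC\,\|w'\|_{L^2([0,2\pi])}$. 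Since $w'(\cdot,t)\in C^1([0,t_h],L^2([0,2\pi]))$ (guaranteed by the hypotheses on $h_i$ via the Cauchy estimates),
\[
\frac{d}{dt}\|w'(\cdot,t)\|_{L^2([0,2\pi])}^2=2\,\mathrm{Re}\!\int_0^{2\pi}\!\overline{w'}\,\partial_t w'\,d\theta\le 2\|\partial_t w'\|_{L^2([0,2\pi])}\|w'\|_{L^2([0,2\pi])}\le 2nC\,\|w'(\cdot,t)\|_{L^2([0,2\pi])}^2,
\]
and Gr\"onwall's inequality gives~(\ref{DDD}) with $C(n):=nC$.

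The one genuinely delicate point is that the constant in~(\ref{iterative}) must not depend on $h_1,h_2$, and this is exactly the role of~(\ref{gg}): it bounds $h_i'$ away from $0$ and $\infty$ on the \emph{larger} disk $\overline{D_r}$, not merely on $\overline{D}$, which is what keeps the Schwarz reflections $|h_i'|^2=h_i'(h_i')^{\ast}$ bounded above and below on a whole annulus around $\partial D$---precisely the input required both for the $L^\infty$-bound on $P[1/|h_1'|^2]$ and for the applicability of Lemma~\ref{Lemma2.2} to $(|h_2'|^2-|h_1'|^2)/(|h_1'|^2|h_2'|^2)$. Granted the uniform bounds, everything else is routine: the difference identity, H\"older's inequality, Lemma~\ref{Lemma2.2}, a finite-dimensional norm comparison, and Gr\"onwall.
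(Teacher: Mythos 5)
Your proof is correct and follows essentially the same route as the paper: the same splitting of $\partial_t(h_1-h_2)$ into a term carrying $h_1'-h_2'$ and a term carrying $P\bigl[\tfrac{1}{|h_1'|^2}-\tfrac{1}{|h_2'|^2}\bigr]$, the M.~Riesz bound of Lemma~\ref{Lemma2.2} on the latter, two-sided bounds on $h_i'$ over $\overline{D_r}$ extracted from (\ref{gg}) combined with the holomorphic extension of $|h_i'|^2$ across $\partial D$ (your Schwarz-reflection/Laurent-coefficient argument is exactly the content of the contour-shift formula (\ref{enlarge1})), and a Gr\"onwall argument with the factor $n$ for the polynomial case. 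The only cosmetic difference is that you bound $\|P[1/|h_1'|^2]\|_{L^{\infty}(\partial D)}$ directly from the annulus bounds, whereas the paper inserts $P[1/|g'|^2]$ and estimates the difference $P\bigl[\tfrac{1}{|h_2'|^2}-\tfrac{1}{|g'|^2}\bigr]$; both yield a constant depending only on $g,t_1,r,l$, as required.
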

\begin{proof}
(1)\begin{equation}
\label{B1}
\frac{\partial}{\partial t}[h_{1}-h_{2}]=\xi\left\{\left[h_{1}^{'}-h_{2}^{'}\right]P\left[\frac{1}{\left| h_{2}^{'}\right|^2}\right]+h_{1}^{'}P\left[\frac{1}{\left| h_{1}^{'}\right|^2}-\frac{1}{\left| h_{2}^{'}\right|^2}\right]\right\}.
\end{equation}
Here, by Lemma~\ref{Lemma2.2},
\begin{equation}
\label{B2}
\left\|P\left[\frac{1}{\left| h_{1}^{'}\right|^2}-\frac{1}{\left| h_{2}^{'}\right|^2}\right]\right\|_{L^2([0,2\pi])}\leq C_{2}\left\|\frac{1}{|h_{1}^{'}|^2}-\frac{1}{|h_{2}^{'}|^2}\right\|_{L^2([0,2\pi])}.
\end{equation}
By taking the $L_{2}$ norm of the right-hand side  and the left-hand side of (\ref{B1}) and then using (\ref{B2}) and H$\ddot{o}$lder's inequality, we obtain
\begin{align}
\label{calp}
&\left\|\frac{\partial}{\partial t}[h_{1}-h_{2}]\right\|_{L^{2}([0,2\pi])}\notag\\
\leq&\left\|h_{1}^{'}-h_{2}^{'}\right\|_{L^{2}([0,2\pi])}\max_{\partial D}\left|P\left[\frac{1}{\left| h_{2}^{'}\right|^2}\right]\right|+C_{2}\left\|h_{1}^{'}\right\|_{L^{\infty}([0,2\pi])}\left\|\frac{1}{| h_{1}^{'}|^2}-\frac{1}{| h_{2}^{'}|^2}\right\|_{L^{2}([0,2\pi])}\notag\\
\leq &\left\{\max_{\partial D}\left|P\left[\frac{1}{\left| h_{2}^{'}\right|^2}\right]\right|+C_{2}\max_{\partial D}\left| h_{1}^{'}\right|\max_{\partial D}\frac{\left|h_{1}^{'}\right|+\left|h_{2}^{'}\right|}{\left|h_{1}^{'}\right|^2 \left|h_{2}^{'}\right|^2}\right\}\left\|h_{1}^{'}-h_{2}^{'}\right\|_{L^{2}([0,2\pi])}
\end{align}
We want to bound 
\[\max_{\partial D}\left| h_{1}^{'}\right|\max_{\partial D}\frac{\left|h_{1}^{'}\right|+\left|h_{2}^{'}\right|}{\left|h_{1}^{'}\right|^2 \left|h_{2}^{'}\right|^2}\mbox{\quad and\quad}\max_{\partial D}\left|P\left[\frac{1}{\left| h_{2}^{'}\right|^2}\right]\right| \]
respectively in (i) and (ii) in terms of $g$ and hereby determine the constant $C$.\\ 
(i)By assumption (\ref{gg}),  for $(z,t)\in (\partial D,[0,t_{h}])$,
\begin{equation}
\label{estimate1}
\left| h_{i}^{'}(z,t)\right|\geq\left|g^{'}(z,t)\right|-\left|h_{i}^{'}(z,t)-g^{'}(z,t)\right|\geq(1-l)\left| g^{'}(z,t)\right|, \quad1\leq i\leq 2
\end{equation}
and 
\begin{equation}
\label{estimate10}
\left| h_{i}^{'}(z,t)\right|\leq\left|g^{'}(z,t)\right|+\left|h_{i}^{'}(z,t)-g^{'}(z,t)\right|\leq(1+l)\left| g^{'}(z,t)\right|, \quad1\leq i\leq 2.
\end{equation}
Therefore, by (\ref{estimate1}) and (\ref{estimate10}), for $0\leq t\leq t_{h}$
\begin{equation}
\label{cals}
\max_{\partial D}\left| h_{1}^{'}\right|\max_{\partial D}\frac{|h_{1}^{'}|+|h_{2}^{'}|}{|h_{1}^{'}|^2 |h_{2}^{'}|^2}\leq 2\frac{(1+l)}{(1-l)^3}\max_{(\partial D,[0,t_{1}])}\left| g^{'}\right|\max_{(\partial D,[0,t_{1}])}\frac{1}{\left| g^{'}\right|^3}.\end{equation}
(ii)We start with finding the upper bound of $P\left[\frac{1}{\left| h_{2}^{'}\right|^2}-\frac{1}{\left| g^{'}\right|^2}\right]$ in terms of $g$ and hereby obtain the upper bound for $P\left[\frac{1}{|h_{2}^{'}|^2}\right]$ in terms of $g$. \par
In Gustafsson~\cite{gustaf1},  it is shown that for given $h\in \omega(\overline{D_{r}})$, 
\begin{equation}
\label{enlarge1}
P\left[\frac{1}{\left| h^{'}\right|^2}\right]=\frac{1}{2\pi i}\int_{\partial D_{r}}\frac{1}{h^{'}(z,t)\overline{h^{'}}(1/z,t)}\frac{z+\xi}{z-\xi}\frac{dz}{z},\quad\xi\in D.
\end{equation}
By (\ref{enlarge1}), we have for $\xi\in D$
\begin{equation}
\label{cal1}
P\left[\frac{1}{\left| h_{2}^{'}\right|^2}-\frac{1}{\left| g^{'}\right|^2}\right]=\frac{1}{2\pi i}\int_{\partial D_{r}}\left(\frac{1}{h_{2}^{'}(z,t)\overline{h_{2}^{'}}(1/z,t)}-\frac{1}{g^{'}(z,t)\overline{g^{'}}(1/z,t)}\right)\frac{z+\xi}{z-\xi}\frac{dz}{z}.
\end{equation}
Therefore, 
\begin{align}
\label{call}
&\max_{\partial D}\left| P\left[\frac{1}{\left| h_{2}^{'}\right|^2}-\frac{1}{\left| g^{'}\right|^2}\right]\right|\notag\\
\leq&\max_{\partial D_{r}}\left|\frac{1}{h_{2}^{'}(z,t)\overline{h_{2}^{'}}(1/z,t)}-\frac{1}{g^{'}(z,t)\overline{g^{'}}(1/z,t)}\right|\frac{r+1}{r-1}\notag\\
=&\max_{\partial D_{r}}\left|\frac{h_{2}^{'}(z,t)-g^{'}(z,t)}{g^{'}(z,t)\overline{g^{'}}(1/z,t)h_{2}^{'}(z,t)}+\frac{\overline{h_{2}^{'}}(1/z,t)-\overline{g^{'}}(1/z,t)}{h_{2}^{'}(z,t)\overline{h_{2}^{'}}(1/z,t)\overline{g^{'}}(1/z,t)}\right|\frac{r+1}{r-1}.\notag\\
\end{align}
By assumption (\ref{gg}), for $(z,t)\in (\partial D_{r},[0,t_{h}])$,
\begin{equation}
\label{estimate20}
\left|\overline{h_{2}^{'}}(1/z,t)- \overline{g^{'}}(1/z,t)\right|\leq l\left|\overline{g^{'}}(1/z,t)\right|,
\end{equation}
\begin{equation}
\label{estimate2}
\left|\overline{h_{2}^{'}}(1/z,t)\right|\geq(1-l)\left| \overline{g^{'}}(1/z,t)\right|.
\end{equation}
By assumption (\ref{gg}), for $(z,t)\in (\partial D_{r},[0,t_{h}])$,
\begin{equation}
\label{estimate60}
\left|h_{2}^{'}(z,t)- g^{'}(z,t)\right|\leq l\left|g^{'}(z,t)\right|,
\end{equation}
\begin{equation}
\label{estimate6}
\left|h_{2}^{'}(z,t)\right|\geq(1-l)\left|g^{'}(z,t)\right|.
\end{equation}
By (\ref{estimate20})-(\ref{estimate6}), 
\[\max_{(\partial D_{r},[0,t_{h}])}\left|\frac{h_{2}^{'}(z,t)-g^{'}(z,t)}{g^{'}(z,t)\overline{g^{'}}(1/z,t)h_{2}^{'}(z,t)}+\frac{\overline{h_{2}^{'}}(1/z,t)-\overline{g^{'}}(1/z,t)}{h_{2}^{'}(z,t)\overline{h_{2}^{'}}(1/z,t)\overline{g^{'}}(1/z,t)}\right|\frac{r+1}{r-1}\]
\[\leq 2l\left[\max_{(\partial D_{r},[0,t_{1}])}\left| \frac{1}{\left|g^{'}(z,t)\right|\left|\overline{g^{'}}(1/z,t)\right|(1-l)^2}\right|\right]\frac{r+1}{r-1}.\]
Therefore, by the above inequality and (\ref{call}), for $0\leq t\leq t_{h}$
\begin{align}
&\max_{\partial D}\left|\xi P\left[\frac{1}{\left| h_{2}^{'}\right|^2}-\frac{1}{\left| g^{'}\right|^2}\right]\right|\notag\\
\leq& 2l\left[\max_{(\partial D_{r},[0,t_{1}])}\left| \frac{1}{\left|g^{'}(z,t)\right|\left|\overline{g^{'}}(1/z,t)\right|(1-l)^2}\right|\right]\frac{r+1}{r-1}.\notag\\
\end{align}
Hence, for $0\leq t\leq t_{h}$, we have
\begin{align}
&\max_{\partial D}\left|\xi P\left[\frac{1}{\left| h_{2}^{'}\right|^2}\right]\right|\notag\\
\leq&\max_{\partial D}\left|\xi P\left[\frac{1}{\left| h_{2}^{'}\right|^2}-\frac{1}{\left| g^{'}\right|^2}\right]\right|+\max_{\partial D}\left| \xi P\left[\frac{1}{\left| g^{'}\right|^2}\right]\right|\notag\\
\leq&2l\left[\max_{(\partial D_{r},[0,t_{1}])}\left| \frac{1}{\left|g^{'}(z,t)\right|\left|\overline{g^{'}}(1/z,t)\right|(1-l)^2}\right|\right]\frac{r+1}{r-1}+\max_{(\partial D,[0,t_{1}])}\left| \xi P\left[\frac{1}{\left| g^{'}\right|^2}\right]\right|.
\end{align}
From (i) and (ii), we prove (\ref{iterative}) by choosing $C$ to be
\begin{align}
C=&\max_{(\partial D,[0,t_{1}])}\left| \xi P\left[\frac{1}{\left| g^{'}\right|^2}\right]\right|+2l\left[\max_{(\partial D_{r},[0,t_{1}])}\left| \frac{1}{|g^{'}(z,t)||\overline{g^{'}}(1/z,t)|(1-l)^2}\right|\right]\frac{r+1}{r-1}\notag\\
&+C_{2}\left\{2\frac{(1+l)}{(1-l)^3}\max_{(\partial D,[0,t_{1}])}\left| g^{'}\right|\max_{(\partial D,[0,t_{1}])}\frac{1}{\left| g^{'}\right|^3}\right\}.
\end{align}
\\
(2)Now we assume that $h_{1}, h_{2}$ are both polynomials of degree $\leq n$. Denote $h_{1}=\sum_{i=1}^{n}\alpha_{i}(t)\xi^{i}$ and $h_{2}=\sum_{i=1}^{n}\beta_{i}(t)\xi^{i}$. Also denote $D(t)$ by
\[D(t)
=\left\|h_{1}^{'}-h_{2}^{'}\right\|_{L^{2}([0,2\pi])}^2=2\pi\left\{\left(\sum_{i=1}^{n}[\left| \alpha_{i}(t)-\beta_{i}(t)\right|]^2i^2\right)\right\}.\]
Then
\begin{align}
          D^{'}(t)&=2\pi\cdot 2\left\{\left(\sum_{i=1}^{n}Re\left[(\alpha_{i}-\beta_{i})\overline{(\alpha_{i}-\beta_{i})_{t}}\right]i^2\right)\right\}\notag\\
&\leq 2\pi\cdot 2(n)\left\{\left(\sum_{i=1}^{n}\left|(\alpha_{i}-\beta_{i})\right|\left|(\alpha_{i}-\beta_{i})_{t}\right| i\right)\right\}\notag\\
&\leq 2\pi\cdot 2(n)\left\{\left(\sum_{i=1}^{n}\left|(\alpha_{i}-\beta_{i})\right|^2i^2\right)\right\}^{\frac{1}{2}}\left\{\left(\sum_{i=1}^{n}\left|(\alpha_{i}-\beta_{i})_{t}\right|^2\right)\right\}^{\frac{1}{2}}\notag\\
&=2(n)\left\|\left[h_{1}^{'}-h_{2}^{'}\right]\right\|_{L^{2}([0,2\pi])}\left\|\frac{\partial}{\partial t}\left[h_{1}-h_{2}\right]\right\|_{L^{2}([0,2\pi])}.\notag
 \end{align}
By applying (\ref{iterative}) to the above inequality, we conclude that for $0\leq t\leq t_{h}$,
\begin{equation}
\label{333}
D^{'}(t)\leq 2C(n)\left\|\left[h_{1}^{'}-h_{2}^{'}\right]\right\|_{L^{2}([0,2\pi])}^2=2C(n)D(t),
\end{equation}
and therefore
\begin{equation}
\label{444}
D(t)\leq D(0)e^{2Ct(n)},
\end{equation}
which proves (\ref{DDD}).
\end{proof}
The following lemma helps us to control the blow-up time of strong* polynomial solutions to (\ref{PG2}).
\begin{lemma}
\label{Lemma2.4}Given a polynomial mapping $f(\xi,0)\in \omega(\overline{D_{r_{0}}})$ for some $r_{0}>1$, then there exists a unique strong* polynomial solution
 to $(\ref{PG2})$ $f(\xi,t)\in \omega(\overline{D_{r_{0}}})$ at least for a short time. Furthermore, if  the strong* polynomial solution ceases to exist at $t=b$,  then for any $r>1$, 
 \begin{equation}
 \label{blowuptime}
\lim\inf_{t\rightarrow b}\left(\min_{\overline{D_{r}}}\left| f^{'}(\xi,t)\right|\right)=0.
\end{equation}
\end{lemma}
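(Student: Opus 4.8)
The plan is to restrict (\ref{PG2}) to polynomials of degree at most $k_{0}$, where it becomes a finite-dimensional autonomous ODE, and then to read off all three assertions from standard ODE theory together with the classical area identity for Hele-Shaw flow. Write $f(\xi,t)=\sum_{i=1}^{k_{0}}a_{i}(t)\xi^{i}$, with $a_{1}>0$ real and $a_{2},\dots,a_{k_{0}}\in\mathbb{C}$, and put
\[
U=\Bigl\{(a_{1},\dots,a_{k_{0}})\in\mathbb{R}\times\mathbb{C}^{k_{0}-1}:\ a_{1}>0,\ \textstyle\sum_{i=1}^{k_{0}}ia_{i}\xi^{i-1}\neq 0\text{ for }|\xi|\le 1\Bigr\},
\]
an open set, which is exactly the set of coefficient vectors of polynomials of degree $\le k_{0}$ lying in $\omega(\overline{D})$; moreover such a polynomial lies in $\omega(\overline{D_{r}})$ for some $r>1$ if and only if its coefficient vector lies in $U$, since $f'\neq 0$ on $\overline{D}$ forces $f'\neq 0$ on some $\overline{D_{r}}$, $r>1$, by compactness. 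By Gustafsson~\cite{gustaf1}, for $f$ with coefficient vector in $U$ the expression $\xi f'P[1/|f'|^{2}]$ is again a polynomial in $\xi$ of degree $\le k_{0}$, and from the contour representation (\ref{enlarge1}) its coefficients $b_{1}(a),\dots,b_{k_{0}}(a)$ are contour integrals whose integrands depend real-analytically on $(a,\overline{a})$ wherever $f'$ is nonvanishing on the relevant circles; hence $b=(b_{1},\dots,b_{k_{0}})$ is a real-analytic vector field on $U$, and a degree-$\le k_{0}$ strong* polynomial solution of (\ref{PG2}) is precisely a $C^{1}$ curve $a(\cdot)$ in $U$ solving $\dot a=b(a)$.

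Given this, short-time existence and uniqueness are immediate: if $f(\cdot,0)\in\omega(\overline{D_{r_{0}}})$ is a polynomial, its coefficient vector lies in the open subset $U_{r_{0}}=\{a\in U:\sum ia_{i}\xi^{i-1}\neq 0\text{ on }\overline{D_{r_{0}}}\}$ of $U$, so Picard--Lindel\"of furnishes a unique maximal solution $a:[0,b)\to U$ of $\dot a=b(a)$ with the prescribed initial value, and since $U_{r_{0}}$ is open, $a(t)$ stays in $U_{r_{0}}$, i.e. $f(\cdot,t)\in\omega(\overline{D_{r_{0}}})$, for $t$ near $0$. Uniqueness within the class of degree-$\le k_{0}$ polynomial solutions is ODE uniqueness, and Gustafsson's degree preservation forbids higher-degree competitors, so $f(\xi,t)$ is the unique strong* polynomial solution and $b$ is its lifespan.

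For the blow-up characterization, assume $b<\infty$. Differentiating the (always well-defined) quantity $\int_{D}|f'(\xi,t)|^{2}\,dA(\xi)=\pi\sum_{i=1}^{k_{0}}i|a_{i}(t)|^{2}$, inserting (\ref{PG2}) on $\partial D$, and using that $\operatorname{Re}P[g]$ has boundary values $g$ (from (\ref{poisson100})), one gets $\frac{d}{dt}\sum_{i}i|a_{i}|^{2}=2$, so $\sum_{i}i|a_{i}(t)|^{2}$ is affine in $t$ and thus bounded on $[0,b)$; hence $M:=\sup_{0\le t<b}|a(t)|<\infty$. Since $\min_{\overline{D_{r}}}|f'|\le\min_{\overline{D}}|f'|$ for $r\ge 1$, it suffices to prove $\min_{\overline{D}}|f'(\cdot,t)|\to 0$ as $t\to b$. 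Fix $\epsilon>0$: the set $K_{\epsilon}=\{a:|a|\le M,\ a_{1}\ge\epsilon,\ \min_{\overline{D}}|\sum_{i}ia_{i}\xi^{i-1}|\ge\epsilon\}$ is a compact subset of $U$, and a maximal ODE solution with finite lifespan must eventually leave every compact subset of $U$; combining this with $|a(t)|\le M$ and the elementary bound $\min_{\overline{D}}|f'|\le|f'(0)|=a_{1}$, we conclude $\min_{\overline{D}}|f'(\cdot,t)|<\epsilon$ for all $t$ sufficiently close to $b$, which gives (\ref{blowuptime}). The one step that is not routine is the reduction in the first paragraph — establishing, from Gustafsson's analysis and (\ref{enlarge1}), that $\xi f'P[1/|f'|^{2}]$ is a degree-$\le k_{0}$ polynomial depending real-analytically on $(a,\overline{a})$ on all of $U$, equivalently that the denominators appearing in (\ref{enlarge1}) vanish nowhere on $U$; everything after that is standard ODE theory and the one-line area computation.
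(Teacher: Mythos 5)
Your proposal is correct, but it reaches the conclusion by a genuinely different route from the paper. The paper stays at the level of the conformal map: part (a) is cited to Gustafsson~\cite{gustaf1}, and (\ref{blowuptime}) is proved by contradiction — if $\min_{\overline{D_{r'}}}|f'(\cdot,t)|\ge C$ on $[0,b)$ for some $r'>1$, then the contour representation (\ref{enlarge1}) together with boundedness of the coefficients gives a uniform bound on $f_t$, so $f(\cdot,t)$ is Lipschitz in $t$, converges to a limit $f(\cdot,b)$ still satisfying $|f'|\ge C$, and the solution is restarted from $f(\cdot,b)$, contradicting maximality of $b$. You instead make the reduction to the coefficient ODE on the open set $U$ explicit, obtain boundedness of $a(t)$ from the area identity $\frac{d}{dt}\sum_i i|a_i|^2=2$ (which the paper uses only implicitly when it asserts the coefficients are bounded), and then invoke the standard escape-from-compacta lemma together with the bound $\min_{\overline{D}}|f'|\le a_1$; the escape lemma is of course proved by exactly the paper's restart argument, so the two mechanisms are equivalent, but your formulation yields the slightly stronger conclusion $\min_{\overline{D}}|f'(\cdot,t)|\to 0$, which implies (\ref{blowuptime}) for every $r>1$ at once. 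The load-bearing step you rightly flag — that for every coefficient vector in $U$ the expression $\xi f'P\bigl[1/|f'|^2\bigr]$ is again a polynomial of degree $\le k_0$ with real-analytic (hence locally Lipschitz) dependence on $(a,\overline{a})$ — is exactly what Gustafsson's reduction supplies, and it can also be checked directly: writing $1/|f'|^2=G(z)=1/\bigl(f'(z)\overline{f'}(1/z)\bigr)$ and splitting $G=G_++G_-$ by partial fractions (poles outside, resp.\ inside, $\overline{D}$), one has $P[G]=2G_+-G_+(0)$ and $f'G_+=1/\overline{f'}(1/z)-f'G_-$ is rational with no poles, hence a polynomial, from which $\deg\bigl(\xi f'P[G]\bigr)\le k_0$ follows by comparing growth at infinity; since the paper's own part (a) likewise just cites \cite{gustaf1}, your level of citation is comparable. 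Two small points to tighten: the remark that "degree preservation forbids higher-degree competitors" should either be cited to \cite{gustaf1} or argued via invariance of the degree-$\le k_0$ subspace under the larger coefficient ODE, and the analyticity of the vector field on $U$ requires choosing the contour radius in (\ref{enlarge1}) locally (depending on the point of $U$), since membership in $U$ only guarantees $f'\ne 0$ on $\overline{D}$, not on a fixed $\overline{D_r}$.
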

\begin{proof}
(a)The first part follows from Gustafsson~\cite{gustaf1}.\\
(b)Assume that (\ref{blowuptime}) does not hold now. Then there exists $r>1$ such that 
\[\lim\inf_{t\rightarrow b}\left(\min_{\overline{D_{r}}}\left| f^{'}(\xi,t)\right|\right)>0.\]
This implies that there exist $C>0$ and $1<r'\leq r$ such that
\[\min_{\overline{D_{r'}}}\left|f^{'}(\xi,t)\right|>C, \quad t\in [0,b).\]
Since each coefficient of $f(\xi,t)$ is bounded for  $t\in [0,b)$, there exists $M>0$ such that
\[\sup_{t\in [0,b)}\max_{\overline{D_{r'}}}\left| f^{'}(\xi,t)\xi\right|\leq M.\]
For $\xi\in\overline{D}$
\begin{align}
&\sup_{t\in [0,b)}\left|f^{'}(\xi,t)\xi P\left[\frac{1}{\left| f^{'}\right|^2}\right]\right|\notag\\
\leq&\sup_{t\in [0,b)}\left|\frac{f^{'}(\xi,t)\xi}{2\pi i}\int_{\partial D_{r'}}\frac{1}{ f^{'}(z,t)\overline{f^{'}}(1/z,t)}\frac{z+\xi}{z-\xi}\frac{dz}{z}\right|\notag\\
\leq&\sup_{t\in [0,b)}\left(\max_{\overline{D}}\left| f^{'}(\xi,t)\xi\right|\cdot\max_{\partial D_{r'}}\left|\frac{1}{f^{'}(z,t)\overline{f^{'}}(1/z,t)}\right|\frac{r'+1}{r'-1}\right)\notag\\
\leq &\frac{M}{C^2}\frac{r'+1}{r'-1}\notag
\end{align}
Therefore, for $0\leq t_{2}<t_{1}<b$, $\xi\in D$
\[\left| f(\xi,t_{1})-f(\xi,t_{2})\right|=\left|\int_{t_{2}}^{t_{1}}f^{'}(\xi,t)\xi P\left[\frac{1}{\left| f^{'}\right|^2}\right]dt\right|\leq\left| t_{1}-t_{2}\right|\frac{M}{C^2}\frac{r'+1}{r'-1}.\]
Therefore $\lim_{t\rightarrow b}f(\xi,t)$ exists and we define it as $f(\xi,b).$ Note that $f(\xi,b)$
satisfies $\min_{\overline{D_{r'}}}\left| f^{'}(\xi,b)\right|\geq C$. Let $f(\xi,t+b)$ be the strong* solution to (\ref{PG2})
with the initial value $f(\xi,b)$ for $t\in [0,\epsilon)$. Then $f(\xi,t)$ is  continuous with respect to $t$ 
for $t\in [0,b+\epsilon)$ and 
\[f(\xi,t)-f(\xi,0)=\int_{0}^{t}f^{'}(\xi,s)\xi P\left[\frac{1}{\left| f^{'}(\cdot,s)\right|^2}\right]ds.\]
This implies that $f(\xi,t)\in\omega(\overline{D})$ is continuously differentiable with respect to $t$ for $t\in [0,b+\epsilon)$ and satisfies (\ref{PG2}). Hence it is impossible that $f(\xi,t)$ ceases to exist  at $t=b$ and therefore for any $r>1$, 
\[\lim\inf_{t\rightarrow b}\left(\min_{\overline{D_{r}}}\left| f^{'}(\xi,t)\right|\right)=0.\]
\end{proof}
\begin{thm}\label{Main Lemma}
Assume that $f_{k_{0}}(\xi,t)\in C^{1}([0,t_{1}],H(\overline{D_{r}}))\cap \omega(\overline{D_{r}})$ is a strong* degree $k_{0}$ polynomial solution to $(\ref{PG2})$ for some $t_{1}>0$ and $r>1$ and that $\rho>r$ and $l< 1$. If $f(\xi,0)$ satisfies the assumption\\
(A)\[\left\|f(\xi,0)-f_{k_{0}}(\xi,0)\right\|_{\rho,1}\leq\frac{l}{\sqrt{k_{0}}}\min_{(\overline{D_{r}},[0,t_{1}])}\left|f_{k_{0}}^{'}\right|\]
where $f^{'}(0,0)\in R$ and $f(0,0)=0$, then the following (a)-(b) are true:\\
(a)There exists $C(f_{k_{0}}, t_{1}, r, l)>0$ such that a strong* solution to (\ref{PG2}) $f(\xi,t)\in C^{1}([0,t_{0}],H(D_{r})\cap C(\overline{D_{r}}))\cap \omega(D_{r})$ where $t_{0}=\min\Big\{\frac{1}{Ck_{0}}(\ln\frac{\rho}{r}),t_{1}\Big\}$. Moreover,
\[\max_{([0,t_{0}])}\left| f^{'}-f^{'}_{k_{0}}\right|_{M(r)}\leq l\min_{(\overline{D_{r}},[0,t_{1}])}\left| f_{k_{0}}^{'}\right|.\]
(b)Furthermore, if there exist $\delta>0$ and $j$ nonnegative integer such that 
\[\left\|f(\cdot,0)-f_{k_{0}}(\cdot,0)\right\|_{\rho,j}\leq\delta,\]
then there exists $c(j, k_{0})>0$ such that
\[\max_{([0,t_{0}])}\left|f^{(j)}-f_{k_{0}}^{(j)}\right|_{M(r)}\leq c(j, k_{0})\delta.\]
\end{thm}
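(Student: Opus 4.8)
The plan is to construct $f(\xi,t)$ as the limit of a sequence of strong* polynomial solutions whose initial data are the truncations of $f(\xi,0)$, and to get quantitative control from Lemma~\ref{estimate11} and Lemma~\ref{Lemma2.4}. Concretely, write $f(\xi,0)=\sum_{j\geq 1}a_j\xi^j$ and let $f_N(\xi,0)=\sum_{j=1}^{N}a_j\xi^j$ be the degree-$N$ truncation (for $N\geq k_0$, after a harmless renormalization so that $f_N'(0,0)>0$). The hypothesis $\|f(\cdot,0)-f_{k_0}(\cdot,0)\|_{\rho,1}\leq \frac{l}{\sqrt{k_0}}\min|f_{k_0}'|$ forces, in particular, $|f_N'(\cdot,0)-f_{k_0}'(\cdot,0)|_{M(r)}$ to be small for every $N$ (since $\rho>r$, the tail $\sum_{j>N}|a_j|j r^{j-1}$ is controlled by $\sum |a_j|\rho^j j^{3/2}$ times a geometric factor $(r/\rho)^{j}$, which is $<1$), so each $f_N(\xi,0)$ lies in $\omega(\overline{D_r})$ and Lemma~\ref{Lemma2.4} produces a strong* polynomial solution $f_N(\xi,t)$ of degree $\leq N$ on some maximal interval. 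The first key step is to show this interval contains $[0,t_0]$ with $t_0=\min\{\frac{1}{Ck_0}\ln\frac{\rho}{r},t_1\}$, uniformly in $N$.

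For that I would run a continuity/bootstrap argument: let $C=C(f_{k_0},t_1,r,l)$ be the constant from Lemma~\ref{estimate11} applied with $g=f_{k_0}$ and the given $l$. On any subinterval $[0,t_h]$ where the smallness condition (\ref{gg}) holds for $h_1=f_N$, $h_2=f_{k_0}$, inequality (\ref{DDD}) gives $\|f_N'(\cdot,t)-f_{k_0}'(\cdot,t)\|_{L^2}^2\leq e^{2C k_0 t}\|f_N'(\cdot,0)-f_{k_0}'(\cdot,0)\|_{L^2}^2$ — here the degree in the exponent is $k_0$, not $N$, because $f_N-f_{k_0}$ has $f_{k_0}$ as its ``low part'' and the relevant bound in the proof of (\ref{DDD}) uses $\max_i i\leq$ (degree of the difference) which one arranges to be $k_0$ by a slightly more careful bookkeeping, or alternatively one first proves the estimate for $g=f_N$ itself and propagates. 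Converting the $L^2$ bound on the derivative back to an $M(r)$ bound costs a factor like $\sum_j j^{1/2}(r/\rho)^{j}\cdot(\text{something})$; this is where the $\ln\frac{\rho}{r}$ appears: as long as $C k_0 t<\ln\frac{\rho}{r}$, the exponential growth $e^{C k_0 t}$ is beaten by the geometric decay coming from evaluating at radius $r$ rather than $\rho$, so the bound $\max_{[0,t_0]}|f_N'-f_{k_0}'|_{M(r)}\leq l\min|f_{k_0}'|$ closes the bootstrap and, by Lemma~\ref{Lemma2.4}, prevents blow-up before $t_0$. Then $\{f_N(\xi,t)\}$ is uniformly bounded and equicontinuous in $t$ (by the PDE (\ref{PG2}) and the uniform lower bound on $|f_N'|$), so a diagonal/Montel argument extracts a limit $f(\xi,t)\in C^1([0,t_0],H(D_r)\cap C(\overline{D_r}))\cap\omega(D_r)$ solving (\ref{PG2}), with the same $M(r)$-estimate, giving part (a).

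For part (b), with the stronger hypothesis $\|f(\cdot,0)-f_{k_0}(\cdot,0)\|_{\rho,j}\leq\delta$, I would again pass through the truncations: by (\ref{DDD}) with $g=f_N$ (so the exponent constant is uniform on $[0,t_0]$ by part (a)) one controls $\|f_N'(\cdot,t)-f_{k_0}'(\cdot,t)\|_{L^2}$ by $\|f_N'(\cdot,0)-f_{k_0}'(\cdot,0)\|_{L^2}$; to recover the $j$-th derivative in $M(r)$ one notes $|f^{(j)}-f_{k_0}^{(j)}|_{M(r)}=\sum_i |b_i| \frac{i!}{(i-j)!}r^{i-j}$ where $b_i$ are the coefficient differences, and by Cauchy–Schwarz this is bounded by $\big(\sum_i |b_i|^2 i^2\big)^{1/2}\big(\sum_i (i\cdots(i-j+1)/i)^2 r^{2(i-j)}\big)^{1/2}$; the first factor is $\frac{1}{\sqrt{2\pi}}\|f'-f_{k_0}'\|_{L^2}$ and the second is a convergent series whose sum is some $c(j,k_0)$ times a power of $(r/\rho)$-type data, and the $L^2$ norm of the derivative difference is in turn controlled by $\|f(\cdot,0)-f_{k_0}(\cdot,0)\|_{\rho,j}\leq\delta$. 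I expect the main obstacle to be precisely this last bookkeeping: showing that the constant in the exponential in (\ref{DDD}) can be taken to depend only on $k_0$ (and on $f_{k_0},t_1,r,l$) and not on the truncation degree $N$, and then tracking how the two competing exponents — the growth $e^{Ck_0 t}$ from the flow and the decay $(r/\rho)^{i}$ from evaluating derivatives at radius $r<\rho$ — combine to yield exactly the interval length $t_0=\frac{1}{Ck_0}\ln\frac{\rho}{r}$. Everything else is routine estimation using Lemma~\ref{Lemma2.2}, Lemma~\ref{estimate11}, and Lemma~\ref{Lemma2.4}.
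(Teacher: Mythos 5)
Your overall scheme (approximate the initial datum by polynomials, run a bootstrap with Lemma~\ref{estimate11}, use Lemma~\ref{Lemma2.4} to rule out early blow-up, pass to the limit in the integral form of (\ref{PG2})) is the same as the paper's, and you correctly identify both the competition between the Gronwall growth $e^{Ck_{0}t}$ and the geometric decay $(r/\rho)^{n}$ as the source of $t_{0}=\frac{1}{Ck_{0}}\ln\frac{\rho}{r}$, and the uniformity of the exponential constant as the main obstacle. But the step you propose to overcome that obstacle is wrong. If you compare the truncation $f_{N}$ directly with $f_{k_{0}}$, the difference $f_{N}-f_{k_{0}}$ is a polynomial of degree $N$ (its coefficients in degrees $k_{0}+1,\dots,N$ are nonzero already at $t=0$, and remain so for $t>0$), so inequality (\ref{DDD}) applied to $h_{1}=f_{N}$, $h_{2}=f_{k_{0}}$ gives the exponent $2CNt$, not $2Ck_{0}t$; there is no ``low part'' bookkeeping that reduces the degree of the difference to $k_{0}$, and the alternative of taking $g=f_{N}$ in Lemma~\ref{estimate11} does not help (it is circular, since existence of $f_{N}$ on $[0,t_{1}]$ with uniform bounds is what you are trying to prove, and in any case the degree factor in (\ref{DDD}) is still $N$). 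Since $\|f_{N}'(\cdot,0)-f_{k_{0}}'(\cdot,0)\|_{L^{2}}$ does not tend to zero as $N\to\infty$ (it tends to the fixed quantity $\|f'(\cdot,0)-f_{k_{0}}'(\cdot,0)\|_{L^{2}}$), the bound $e^{2CNt}D(0)$ degenerates for every fixed $t>0$, and your bootstrap does not close uniformly in $N$.

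The missing idea, which is the heart of the paper's proof, is a telescoping construction: let $g_{n}$ be the strong* polynomial solution with initial value $f_{k_{0}}(\xi,0)+\sum_{i=1}^{n}b_{i}(0)\xi^{i}$, and estimate consecutive differences $g_{n}-g_{n+1}$, whose initial difference is the single monomial $b_{n+1}(0)\xi^{n+1}$ with $|b_{n+1}(0)|\leq M_{n+1}\rho^{-(n+1)}$. One applies Lemma~\ref{estimate11} with the fixed $g=g_{0}=f_{k_{0}}$ (so $C=C(f_{k_{0}},t_{1},r,l)$), keeping hypothesis (\ref{gg}) valid for both $g_{n}$ and $g_{n+1}$ inductively by splitting the total allowance with weights $d_{n}\geq 0$, $\sum d_{n}=1$. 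Then the degree-dependent factor $e^{2Ck_{0}(n+1)t}$ in (\ref{DDD}) (the degree of the pair is at most $\max\{k_{0},n+1\}\leq k_{0}(n+1)$) is exactly compensated by $\rho^{-2(n+1)}r^{2n}$ when $t\leq\frac{1}{Ck_{0}}\ln\frac{\rho}{r}$, which closes the continuity argument on $[0,t_{0}]$ for every $n$, and summing over $n$ gives the bound $l\min_{(\overline{D_{r}},[0,t_{1}])}|f_{k_{0}}'|$; the limit $f=\lim g_{n}$ then inherits the estimate and solves (\ref{PG2}). Part (b) is the same telescoping with the stronger weights $M_{n+1}\leq d_{n}\delta/(n+1)^{\frac{1}{2}+j}$ absorbing the extra powers of $(n+1)$ coming from the $j$-th derivative. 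Without this one-monomial-at-a-time decomposition (or an equivalent device), your argument as written does not establish existence up to $t_{0}$, so the proof has a genuine gap at its central step.
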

\begin{rem}The strong* solution $f(\xi,t)$ is obtained by using many polynomial strong* solutions to (\ref{PG2}) to approximate it. \end{rem}
\begin{proof}
(a)We take the constant $C(f_{k_{0}}, t_{1}, r, l)$ in (a) to be the same as the one defined in Lemma~\ref{estimate11}. We want to prove (a) in the following, by showing that there exists a strong* solution $f(\xi,t)\in \omega(D_{r})$ to (\ref{PG2}) for $0\leq t\leq t_{0}$, where $f(\xi,0)=f_{k_{0}}(\xi,0)+\sum_{i=1}^{\infty}b_{i}(0)\xi^{i}$ and 
\begin{equation}
\label{cal2}
\sum_{k=1}^{\infty}\left| b_{k}(0)\right| \rho^{k}k^{3/2}\leq\frac{l}{\sqrt{k_{0}}}\min_{(\overline{D_{r}},[0,t_{1}])}\left| f_{k_{0}}^{'}\right|.
\end{equation}
Denote the strong* polynomial solution to (\ref{PG2}) with the initial value $f_{k_{0}}(\xi,0)+\sum_{i=1}^{k}b_{i}(0)\xi^{i}$ by $g_{k}(\xi,t)$. The proof for (a) is split into step1 and step2. In step1, we prove that $g_{k}(\xi,t), k\geq 1$ exists for $t\in [0,t_{0}]$. In step2, we prove that $g_{k}(\xi,t)$ converges to the strong* solution $f(\xi,t)$ as $k$ goes to infinity and that $f(\xi,t)$ exists for $t\in [0,t_{0}]$.\\
Step1: \\
By (\ref{cal2}), there exist $\{d_{k}\}_{k\geq 0}$  nonnegative and  $\sum_{k=0}^{\infty}d_{k}=1$ such that
$\mid b_{i}(0)\mid\leq M_{i}\rho^{-i}$ for $i\geq 1$ where
\[M_{k+1}\leq\frac{l}{\sqrt{k_{0}}}\frac{d_{k}}{(k+1)^{3/2}}\min_{(\overline{D_{r}},[0,t_{1}])}\left| f_{k_{0}}^{'}\right|, \quad k\geq 0.\]
\textbf{Claim}:\\
 Prove that for $k\geq 0$, $g_{k}(\xi,t)\in C^{1}([0,t_{0}], H(\overline{D_{r}}))\cap \omega(\overline{D_{r}})$ and
\[\max_{([0,t_{0}])}\left| g_{k}^{'}-g_{k+1}^{'}\right|_{M(r)}\leq ld_{k}\min_{(\overline{D_{r}},[0,t_{1}])}\left| g_{0}^{'}\right|.\]
\begin{proof}(proof of claim)
We prove it by induction as follows.\\
(i)Assume for $0\leq k\leq n-1$,
\[\max_{([0,t_{0}])}\left| g_{k}^{'}-g_{k+1}^{'}\right|_{M(r)}\leq ld_{k}\min_{(\overline{D_{r}},[0,t_{1}])}\left| g_{0}^{'}\right|.\]
(ii){\bf Subclaim}: \\
For $t\in [0,t_{0}]$
\begin{equation}
\label{induction2}
\left| g_{n}^{'}-g_{n+1}^{'}\right|_{M(r)}\leq ld_{n}\min_{(\overline{D_{r}},[0,t_{1}])}\left| g_{0}^{'}\right|.
\end{equation}
\begin{proof}{(of subclaim)}
Denote $s_{n}=\sup\{T\leq t_{0}|\mbox{$g_{n+1}(\xi,t)$ satisfies (\ref{induction2}) for $t\in [0,T]$} \}$. Then $|g_{n+1}'|\geq (1-l)|g_{0}'|$ for $t\in [0,s_{n})$. Therefore,  by Lemma~\ref{Lemma2.4}, the value $s_{n}=\max\{T\leq t_{0}|\mbox{$g_{n+1}(\xi,t)$ satisfies (\ref{induction2}) for $t\in [0,T]$} \}$.\par
For $0<t\leq s_{n}$, 
\begin{equation}
\label{induction3}
\max_{([0,t])}\left| g_{n+1}^{'}-g_{0}^{'}\right|_{M(r)}\leq\sum_{k=0}^{n}ld_{k}\min_{(\overline{D_{r}},[0,t_{1}])}\left| g_{0}^{'}\right|\leq l\min_{(\overline{D_{r}},[0,t_{1}])}\left| g_{0}^{'}\right|.
\end{equation}
Also by the assumption in (i), we have
\begin{equation}
\label{induction1}
\max_{([0,t_{0}])}\left| g_{n}^{'}-g_{0}^{'}\right|_{M(r)}\leq\sum_{k=0}^{n-1}ld_{k}\min_{(\overline{D_{r}},[0,t_{1}])}\left| g_{0}^{'}\right|\leq l\min_{(\overline{D_{r}},[0,t_{1}])}\left| g_{0}^{'}\right|.
\end{equation}
From (\ref{induction3}) and (\ref{induction1}),  $g_{0}$, $g_{n}$ and $g_{n+1}$ satisfy the assumption for $g$, $h_{1}$ and $h_{2}$ in Lemma~\ref{estimate11} respectively. Denote  $D(t)=\left\|g_{n+1}^{'}-g_{n}^{'}\right\|_{L^{2}([0,2\pi])}^2$. From Lemma~\ref{estimate11}, we can obtain that for $0\leq t\leq s_{n}$
\begin{equation}
\label{inequality}
D(t)\leq e^{2C(n+1)k_{0}t}D(0).
\end{equation}
{\bf We need to show $s_{n}=t_{0}$}.\par
Note that if $s_{n}<t_{0}$, then the following $(R_{1})$ must hold:\\
($R_{1}$) At time $t=s_{n}$,
\[\left| g_{n}^{'}-g_{n+1}^{'}\right|_{M(r)}=d_{n}\min_{(\overline{D_{r}},[0,t_{1}])}\left| g_{0}^{'}\right|l.\]
Assume that $s_{n}< t_{0}$ now. Then for $0\leq t\leq s_{n}$,
\begin{align}
&\left| g_{n}^{'}-g_{n+1}^{'}\right|_{M(r)}\notag\\
           \leq &\sqrt{D(t)(n+1)k_{0}}r^{(n)}\notag\\
           \leq &\sqrt{(n+1)k_{0}D(0)e^{2Ctk_{0}(n+1)}r^{2(n)}}\notag\\
           \leq &\sqrt{(n+1)k_{0}D(0)e^{2Cs_{n}k_{0}(n+1)}r^{2(n)}}\notag\\
           < &\sqrt{(n+1)k_{0}D(0)e^{2Ct_{0}k_{0}(n+1)}r^{2(n)}}.\notag
                \end{align} 
Since  
\[D(0)(n+1)k_{0}\leq(\rho)^{-2(n+1)}(d_{n})^2\min_{(\overline{D_{r}},[0,t_{1}])}\left| g_{0}^{'}\right|^2l^2,\]
we have
\begin{align}
&\max_{([0,s_{n}])}\left| g_{n}^{'}-g_{n+1}^{'}\right|_{M(r)}\notag\\
\leq &\sqrt{(n+1)k_{0}D(0)e^{2Ct_{0}k_{0}(n+1)}r^{2(n)}}\notag\\
< & d_{n}\min_{(\overline{D_{r}},[0,t_{1}])}\left| g_{0}^{'}\right|l\notag
\end{align}
which contradicts the remark ($R_{1}$). Therefore, $s_{n}=t_{0}$.\end{proof}
\end{proof}
Step2:\\
By Step 1, for $k\geq 1$
\[\max_{([0,t_{0}])}\left| g_{k}^{'}-g_{0}^{'}\right|_{M(r)}\leq l\sum_{n=0}^{\infty}d_{n}\min_{(\overline{D_{r}},[0,t_{1}])}\left| g_{0}^{'}\right|\leq l\min_{(\overline{D_{r}},[0,t_{1}])}\left| g_{0}^{'}\right|.\] 
There exists $f(\xi,t)\in C([0,t_{0}], \omega(D_{r})\cap C(\overline{D_{r}}))$ such that $| g^{'}_{k}-f^{'}|_{M(r)}$ goes to zero as $k$ goes to $\infty$. Furthermore, 
\[\max_{([0,t_{0}])}\left| f^{'}-g_{0}^{'}\right|_{M(r)}\leq l\min_{(\overline{D_{r}},[0,t_{1}])}\left|g_{0}^{'}\right|.\]
Still, we have to show that $f(\xi,t)$ satisfies $(\ref{PG2})$. Fix $1<r^{'}<r$. For $\xi\in D_{r'}$ and $0\leq t\leq t_{0}$,
\begin{equation}
\label{ppp}
\frac{\partial}{\partial t}g_{k}(\xi,t)=\frac{g_{k}^{'}(\xi,t)\xi}{2\pi i}\int_{\partial D_{r'}}\frac{1}{g_{k}^{'}(z,t)\overline{g_{k}^{'}}(1/z,t)}\frac{z+\xi}{z-\xi}\frac{dz}{z}.
\end{equation}
By integrating (\ref{ppp}) with respect to $t$, we have that for $\xi\in D_{r'}$ and $0\leq t\leq t_{0}$,
\[g_{k}(\xi,t)-g_{k}(\xi,0)=\int_{0}^{t}\frac{g_{k}^{'}(\xi,s)\xi}{2\pi i}\int_{\partial D_{r'}}\frac{1}{g_{k}^{'}(z,s)\overline{g_{k}^{'}}(1/z,s)}\frac{z+\xi}{z-\xi}\frac{dz}{z}ds.\]
Let $k\rightarrow\infty$. For $\xi$ in any compact subset of $D_{r'}$,
\begin{equation}
\label{continous} 
f(\xi,t)-f(\xi,0)=\int_{0}^{t}\frac{f^{'}(\xi,s)\xi}{2\pi i}\int_{\partial D_{r'}}\frac{1}{f^{'}(z,s)\overline{f^{'}}(1/z,s)}\frac{z+\xi}{z-\xi}\frac{dz}{z}ds
\end{equation}
for some $f(\xi,t)\in C([0,t_{0}], \omega(D_{r})\cap C(\overline{D_{r}}))$. Furthermore, the identity  $(\ref{continous})$ shows that $f(\xi,t)\in C^{1}([0,t_{0}], H(D_{r})\cap C(\overline{D_{r}}))$.\\
(b)Now assume (b). Then 
\[\left| b_{i}(0)\right|\leq M_{i}\rho^{-i},\quad i\geq 1\]
where
\[M_{k+1}\leq\frac{1}{(k+1)^{\frac{1}{2}+j}}d_{k}\delta,\quad k\geq 0.\]
First we look at the case $j=2$. Under (b),
\begin{align}
&\max_{([0,t_{0}])}\left| g_{n}^{''}-g_{n+1}^{''}\right|_{M(r)}\notag\\
\leq&\sqrt{(n+2)^{3}(k_{0}+1)^{3}\frac{1}{3}D(0)e^{2Ct_{0}k_{0}(n+1)}}r^{n-1}\notag\\
=&\left(\frac{n+2}{n+1}\right)^{\frac{3}{2}}\frac{1}{\sqrt{3}}(k_{0}+1)^{\frac{3}{2}}\sqrt{D(0)(n+1)^3e^{2ct_{0}k_{0}(n+1)}}r^{n-1}\notag\\
\leq &\left(\frac{n+2}{n+1}\right)^{\frac{3}{2}}\frac{1}{\sqrt{3}}(k_{0}+1)^{\frac{3}{2}}d_{n}\delta, \quad n\geq 0.\notag
                \end{align} 
Therefore, we have for $n\geq 1$
\[\max_{([0,t_{0}])}\left| g_{0}^{''}-g_{n}^{''}\right|_{M(r)}\leq\frac{1}{\sqrt{3}}2^{\frac{3}{2}}(k_{0}+1)^{\frac{3}{2}}\delta.\]
Similarly, for $j\geq 2$,  under the assumption of (b), there exists $c(j,k_{0})>0$ such that
\begin{align}
&\max_{([0,t_{0}])}\left| g_{n}^{(j)}-g_{n+1}^{(j)}\right|_{M(r)}\notag\\
\leq&c(j,k_{0})\sqrt{(n+1)^{2j-1}D(0)e^{2Ct_{0}k_{0}(n+1)}}\notag\\
\leq&c(j,k_{0})d_{n}\delta.\notag
     \end{align} 
Therefore, we have 
\[             
\max_{([0,t_{0}])}\left| g_{0}^{(j)}-g_{n}^{(j)}\right|_{M(r)}\leq c(j,k_{0})\delta.\]
Let $n\rightarrow\infty,$
\[\max_{([0,t_{0}])}\left| g_{0}^{(j)}-f^{(j)}\right|_{M(r)}\leq  c(j,k_{0})\delta.\]
\end{proof}
\subsection{A perturbation theorem for strong polynomial solutions}
\label{univalentthm}
In the former subsection, the solutions we considered are locally univalent in $\overline{D}$. However, the solutions which have physical meaning are required to be univalent in $\overline{D}$. The following Lemma~\ref{univalence} states that these locally univalent solutions are univalent if they are close to a univalent solution.
\begin{lemma}
\label{univalence}
Given $g(\xi,t)\in C^{1}([0,T_{0}], H(\overline{D_{r}}))\cap O(\overline{D_{r}})$ and $1<r^{'}<r$, there exists $\eta(g, T_{0}, r^{'})>0$ such that if 
\[\max_{([0,T_{0}])}\left| f^{'}(\cdot,t)-g^{'}(\cdot,t)\right|_{M(r)}\leq\eta\]
where $f(\xi,t)\in C([0,T_{0}], H(D_{r})\cap C(\overline{D_{r}}))$, then for $0\leq t\leq T_{0}$,
\[f(\xi,t)\in O(\overline{D_{r'}}).\]
\end{lemma}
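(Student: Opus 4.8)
The plan is to upgrade the univalence of $g$ on $\overline{D_r}$ to a quantitative lower bound on divided differences that is uniform in $t\in[0,T_0]$, and then to absorb the perturbation using the gap between $r'$ and $r$.

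First I would introduce the divided-difference function $\Phi(\xi_1,\xi_2,t)$, equal to $\bigl(g(\xi_1,t)-g(\xi_2,t)\bigr)/(\xi_1-\xi_2)$ when $\xi_1\neq\xi_2$ and to $g'(\xi_1,t)$ when $\xi_1=\xi_2$. Since $\overline{D_{r'}}$ is convex, $\Phi(\xi_1,\xi_2,t)=\int_0^1 g'\bigl(\xi_2+s(\xi_1-\xi_2),t\bigr)\,ds$, so $\Phi$ is jointly continuous on the compact set $K:=\overline{D_{r'}}\times\overline{D_{r'}}\times[0,T_0]$ (continuity in $t$ being inherited from $g\in C^1([0,T_0],H(\overline{D_r}))$). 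Moreover $\Phi$ never vanishes on $K$: off the diagonal because $g(\cdot,t)$ is injective on $\overline{D_r}\supset\overline{D_{r'}}$, and on the diagonal because the derivative of a univalent holomorphic map has no zeros in $D_r$. Hence $|\Phi|\geq 2m$ on $K$ for some $m=m(g,T_0,r')>0$.

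Next I would estimate the divided difference of $h:=f-g$. Writing $h(\xi,t)=\sum_{i\geq 0}c_i(t)\xi^i$, the hypothesis reads $\sum_{i\geq 1}i\,|c_i(t)|\,r^{i-1}=\bigl|f'(\cdot,t)-g'(\cdot,t)\bigr|_{M(r)}\leq\eta$, hence $|c_i(t)|\leq\eta/(i\,r^{i-1})$. For $\xi_1,\xi_2\in\overline{D_{r'}}$ one has $\bigl|\sum_{j=0}^{i-1}\xi_1^{\,j}\xi_2^{\,i-1-j}\bigr|\leq i\,(r')^{i-1}$, so
\[
\left|\frac{h(\xi_1,t)-h(\xi_2,t)}{\xi_1-\xi_2}\right|\;\leq\;\sum_{i\geq 1}i\,|c_i(t)|\,(r')^{i-1}\;\leq\;\eta\sum_{i\geq 1}\Bigl(\tfrac{r'}{r}\Bigr)^{i-1}\;=\;\frac{\eta\,r}{\,r-r'\,}.
\]
Choosing $\eta=\eta(g,T_0,r'):=m(r-r')/r$ and using the triangle inequality, I get, for every $t\in[0,T_0]$ and every pair $\xi_1\neq\xi_2$ in $\overline{D_{r'}}$,
\[
\left|\frac{f(\xi_1,t)-f(\xi_2,t)}{\xi_1-\xi_2}\right|\;\geq\;\bigl|\Phi(\xi_1,\xi_2,t)\bigr|-\frac{\eta\,r}{\,r-r'\,}\;\geq\;2m-m\;=\;m\;>\;0,
\]
so $f(\cdot,t)$ is injective on $\overline{D_{r'}}$. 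Since $f$ is holomorphic on $D_r\supset\overline{D_{r'}}$ and the normalizations $f(0,t)=0$, $f'(0,t)>0$ hold in every situation in which this lemma is invoked (they are built into the solutions produced by Theorem~\ref{Main Lemma}), this gives $f(\cdot,t)\in O(\overline{D_{r'}})$ for all $t\in[0,T_0]$.

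The argument is elementary; the only point requiring care is making the lower bound uniform in $t$, which is why $\Phi$ must be handled on the whole compact product $K$ rather than for each fixed $t$. The sole structural input is the strict inequality $r'<r$, which makes $\sum_{i\geq 1}(r'/r)^{i-1}$ convergent and lets the $|\cdot|_{M(r)}$-smallness of $f'-g'$ dominate the divided difference of $f-g$ on $\overline{D_{r'}}$; with that in hand the choice of $\eta$ is forced.
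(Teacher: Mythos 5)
Your argument is correct, but it is a genuinely different proof from the one in the paper. The paper proceeds in two steps: first, under the quantitative hypothesis $\max_{[0,T_{0}]}|f'-g'|_{M(r)}\leq\tfrac12\min|g'|$, it proves \emph{local} injectivity of $f(\cdot,t)$ on all disks $D_{r_{0}}(z_{0})$, $z_{0}\in\overline{D_{r'}}$, of a uniform radius $r_{0}$, via the Noshiro--Warschawski-type estimate $\mathrm{Re}\,\bigl[f'(z,t)(z-z_{0})/(f(z,t)-f(z_{0},t))\bigr]\geq\tfrac12$; second, it obtains $\eta$ non-constructively by contradiction: a sequence of counterexamples would produce coincidence points that stay at distance $\geq r_{0}$ by the first step, and passing to limits would violate the injectivity of $g$. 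You instead work directly with divided differences: the lower bound $|\Phi|\geq 2m$ on the compact product $\overline{D_{r'}}\times\overline{D_{r'}}\times[0,T_{0}]$ encodes both the injectivity of $g$ (off the diagonal) and the nonvanishing of $g'$ (on the diagonal), and the coefficient decay $|c_{i}|\leq\eta/(i\,r^{\,i-1})$ forced by the $M(r)$-bound, combined with the geometric series in $r'/r$, controls the divided difference of $f-g$ on the smaller disk. This yields univalence in one stroke, with an explicit constant $\eta=m(r-r')/r$, whereas the paper's $\eta$ is only shown to exist; both arguments use compactness in $t$, yours through the uniform bound on $\Phi$, the paper's through extraction of convergent subsequences. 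One further point, which you handle the same way the paper implicitly does: both proofs establish only injectivity of $f(\cdot,t)$ on $\overline{D_{r'}}$, and the normalizations $f(0,t)=0$, $f'(0,t)>0$ required for membership in $O(\overline{D_{r'}})$ must indeed be supplied by the context in which the lemma is applied (they are not consequences of the stated hypotheses, as the example $f=g+c$ with a small constant $c\neq0$ shows), so flagging this is appropriate rather than a defect of your argument.
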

\begin{proof}
The proof is separated into two parts (a)-(b):\\
(a)First assume that 
\begin{equation}
\label{hotel}
\max_{([0,T_{0}])}\left| f^{'}(\cdot,t)-g^{'}(\cdot,t)\right|_{M(r)}\leq\frac{1}{2}\min_{(\overline{D_{r}},[0,T_{0}])}\left| g^{'}(z,t)\right|.
\end{equation}
We want to show that there exists $r_{0}>0$ such that for any fixed $z_{0}\in\overline{D_{r'}}$,
\[f(\cdot,t):\overline{D_{r_{0}}(z_{0})}\rightarrow f(\overline{D_{r_{0}}(z_{0})})\]
is univalent. It is sufficient to prove that 
\[Re\frac{f^{'}(z,t)(z-z_{0})}{f(z,t)-f(z_{0},t)}\geq\frac{1}{2}, \quad z\in D_{r_{0}}(z_{0})\]
which means the function is injective on $\partial D_{r_{0}}(z_{0})$ and therefore is injective for $z\in\overline{D_{r_{0}}(z_{0})}$.\par
Now fix $z_{0}\in\overline{D_{r'}}$. Since $f(z,t)$ is analytic in $D_{r}$,
\[f(z,t)=f(z_{0},t)+\sum_{n=1}^{\infty}\frac{f^{(n)}(z_{0},t)}{n!}(z-z_{0})^{n},\quad z\in D_{r} .\]
Let 
\[l=\min\{r^{'}, r-r^{'}\}, M=\frac{3}{2}\max_{(\overline{D_{r}},[0,T_{0}])}\left| g^{'}\right| , m=\frac{1}{2}\min_{(\overline{D_{r}},[0,T_{0}])}\left| g^{'}\right|.\]
By (\ref{hotel}),  we can get that 
\[\max_{(\overline{D_{r}},[0,T_{0}])}\left| f(z,t)\right|\leq M, \quad\min_{(\overline{D_{r}},[0,T_{0}])}\left| f^{'}(z,t)\right|\geq m.\]
Note that
\[\left|\frac{f^{(n)}(z_{0},t)}{n!}\right|\leq Ml^{-(n)}, \quad n\geq 1.\]
Pick $0<r_{0}< l$ such that $\sum_{n=2}^{\infty}Ml^{-n}r_{0}^{n-1}(n-1)\leq\frac{m}{4}$. For $|z-z_{0}|<r_{0}$, we have
\begin{align}
&\left|\frac{f^{'}(z,t)(z-z_{0})}{f(z,t)-f(z_{0},t)}-1\right|\notag\\
=&\left|\frac{\sum_{n=1}^{\infty}\frac{f^{(n)}(z_{0},t)}{n!}(z-z_{0})^{n-1}n}{\sum_{n=1}^{\infty}\frac{f^{(n)}(z_{0},t)}{n!}(z-z_{0})^{n-1}}-1\right|\notag\\
=&\left|\frac{\sum_{n=2}^{\infty}\frac{f^{(n)}(z_{0},t)}{n!}(z-z_{0})^{n-1}(n-1)}{f^{'}(z_{0},t)+\sum_{n=2}^{\infty}\frac{f^{(n)}(z_{0},t)}{n!}(z-z_{0})^{n-1}}\right|\notag\\
\leq&\frac{\sum_{n=2}^{\infty}Ml^{-n}\left| z-z_{0}\right|^{n-1}(n-1)}{m-\sum_{n=2}^{\infty}Ml^{-n}\left| z-z_{0}\right|^{n-1}}\leq\frac{1}{2}.\notag
\end{align}
It follows from the above inequality that
\[Re\frac{f^{'}(z,t)(z-z_{0})}{f(z,t)-f(z_{0},t)}\geq\frac{1}{2}, \quad z\in D_{r_{0}}(z_{0}).\]
(b)Assume that there doesn't exist such $\eta>0$ such that the Lemma holds, then there exist $\eta_{k}$, $f^{k}(\xi,t)\in C^{1}([0,T_{0}], H(D_{r})\cap C(\overline{D_{r}}))$ and  $\xi_{k}^{1}, \xi_{k}^{2}\in \overline{D_{r'}}$ where $\xi_{k}^{1}\neq \xi_{k}^{2}$, such that\\
(1) $\eta_{k}$ goes to zero as $k$ goes to $\infty$;\\
(2) $f^{k}(\xi_{k}^{1},t_{k})=f^{k}(\xi_{k}^{2},t_{k})$;\\
(3)$\left| f^{k}(\xi_{k}^{1},t_{k})-g(\xi_{k}^{1},t_{k})\right|\leq\eta_{k},\left| f^{k}(\xi_{k}^{2},t_{k})-g(\xi_{k}^{2},t_{k})\right|\leq\eta_{k}.$\\
Without loss of generality, assume $t_{k}$ converges to $t_{0}$, $\xi_{k}^{1}$ converges to $\xi^{1}$ and $\xi_{k}^{2}$ converges to $\xi^{2}$. Note that $\mid \xi^{1}-\xi^{2}\mid\geq r_{0}$. This implies
\[g(\xi^{1},t_{0})=g(\xi^{2},t_{0}).\]
This contradicts the assumption that $g(\xi,t_{0})$ is univalent in $\overline{D_{r}}$. Therefore, there exists $\eta>0$ such that the Lemma holds.
\end{proof}
\begin{proof}{\textbf{(proof of Theorem~\ref{twins})}}\\
(a)By Lemma~\ref{univalence}, there exists $\eta(f_{k_{0}}, T_{0}, r^{'})>0$ such that if
$f(\xi,t)$ satisfies
\[f(\xi,t)\in C^{1}([0,T_{0}],H(D_{r}))\mbox{\quad and\quad}\max_{([0,T_{0}])}\left| f_{k_{0}}'(\cdot,t)-f'(\cdot,t)\right|_{M(r)}\leq\eta,\]
then $f(\xi,t)\in O(\overline{D_{r'}})$ for $t\in [0, T_{0}]$.\\
(b)We apply Theorem~\ref{Main Lemma} by letting $t_{1}=T_{0}$, $l=\frac{1}{2}$ , $\delta$ small enough such that
\[\delta<\min_{1\leq j\leq k}\left\{\frac{\epsilon}{c(j,k_{0})}\right\}, \quad\delta<\frac{l}{\sqrt{k_{0}}}\min_{(\overline{D_{r}},[0,T_{0}])}\left| f_{k_{0}}^{'}(\xi,t)\right|,\quad\delta<\min_{1\leq j\leq k}\left\{\frac{\eta}{c(j,k_{0})}\right\}\]
 and $\rho>1$ large enough such that $\frac{1}{Ck_{0}}(\ln\rho-\ln r)\geq T_{0}$. We get that  for $0\leq n\leq k$, $0\leq t\leq T_{0},$ the strong* solution to  (\ref{PG2}) $f(\xi,t)$ satisfies
\[\left| f_{k_{0}}^{(n)}(\cdot,t)-f^{(n)}(\cdot,t)\right|_{M(r)}<\min\{\epsilon,\eta\}.\]
Therefore $f(\xi,t)\in O(\overline{D_{r'}})$ and hence is a strong solution to  (\ref{PG1}).
\end{proof}

\section{Application-Evolution of perturbed disks in the suction case}
\label{sec9}
In this section, we aim to characterize the evolution of perturbed disks in the suction case.
\begin{lemma}
\label{suction}
Given $f_{k_{0}}(\xi,0)\in O(\overline{D})$ which is a polynomial of degree $k_{0}$. Let $f_{k_{0}}(\xi,t)$ be the strong solution to (\ref{PGS}) and the strong solution cease to exist as $t=b$. Then given $0<T_{0}<b$, there exist $\rho>1$ and $\delta>0$ such that, if $\|f(\xi,0)-f_{k_{0}}(\xi,0)\|_{\rho,1}<\delta$, then the solution $f(\xi,t)$ to (\ref{PGS}) exists for $0\leq t\leq T_{0}$.
\end{lemma}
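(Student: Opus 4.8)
\textbf{Proof proposal for Lemma~\ref{suction}.}

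The plan is to obtain this as a direct corollary of Theorem~\ref{twins} applied with $k=1$ in the suction case. The only genuine work is to produce a single radius $r>1$ on which $f_{k_{0}}(\cdot,t)$ is univalent uniformly over the compact time interval $[0,T_{0}]$; once that is in hand, Theorem~\ref{twins} supplies the perturbation parameters $\delta$ and $\rho$ and, as part of its conclusion, the existence of the strong solution $f(\xi,t)$ on all of $[0,T_{0}]$.

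First I would fix $T_{0}<b$ and recall that, since $f_{k_{0}}(\xi,0)$ is a degree $k_{0}$ polynomial in $O(\overline{D})$, the strong solution $f_{k_{0}}(\xi,t)$ to (\ref{PGS}) is a family of degree $k_{0}$ polynomials (Gustafsson~\cite{gustaf1}), hence entire in $\xi$ for each $t$, and is continuously differentiable in $t$ on $[0,b)$. For each $t\in[0,T_{0}]$ univalence in a neighborhood of $\overline{D}$ forces $f_{k_{0}}'(\cdot,t)\neq 0$ on $\overline{D}$, so there is $r_{t}>1$ with $f_{k_{0}}(\cdot,t)\in O(\overline{D_{r_{t}}})$. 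To upgrade this to a uniform radius I would argue by contradiction: if no uniform $r>1$ existed, there would be $t_{n}\to t^{\ast}\in[0,T_{0}]$ and points $\xi_{n}^{1}\neq\xi_{n}^{2}$ in $\overline{D_{1+1/n}}$ with $f_{k_{0}}(\xi_{n}^{1},t_{n})=f_{k_{0}}(\xi_{n}^{2},t_{n})$; passing to subsequences $\xi_{n}^{i}\to\xi^{i}\in\overline{D}$ and using joint continuity of $f_{k_{0}}$ in $(\xi,t)$ one gets $f_{k_{0}}(\xi^{1},t^{\ast})=f_{k_{0}}(\xi^{2},t^{\ast})$, which contradicts either the injectivity of $f_{k_{0}}(\cdot,t^{\ast})$ on $\overline{D}$ (if $\xi^{1}\neq\xi^{2}$) or the non-vanishing of $f_{k_{0}}'$ near $(\xi^{1},t^{\ast})$ together with the resulting local injectivity of $f_{k_{0}}(\cdot,t_{n})$ on a fixed small ball (if $\xi^{1}=\xi^{2}$). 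Hence such an $r>1$ exists.

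With $r>1$ fixed and $f_{k_{0}}(\cdot,t)\in O(\overline{D_{r}})$ for $t\in[0,T_{0}]$, I would then invoke Theorem~\ref{twins} (in the suction form, for (\ref{PGS})) with the degree $k_{0}$ polynomial solution $f_{k_{0}}$, with $k=1$, with an arbitrary choice such as $\epsilon=1$, and with some fixed $1<r'<r$. This produces $\delta=\delta(f_{k_{0}},T_{0},1,1,r')>0$ and $\rho=\rho(f_{k_{0}},T_{0},1,1,r')>1$ such that, whenever $\|f(\cdot,0)-f_{k_{0}}(\cdot,0)\|_{\rho,1}<\delta$ with $f(0,0)=0$ and $f'(0,0)>0$, the strong solution $f(\xi,t)$ to (\ref{PGS}) satisfies $f(\xi,t)\in O(\overline{D_{r'}})\cap C^{1}([0,T_{0}],H(D_{r}))$ for $0\le t\le T_{0}$. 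In particular $f(\xi,t)$ is a bona fide strong solution on the whole interval $[0,T_{0}]$, which is exactly the assertion of the lemma.

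I expect the main (and essentially only) obstacle to be the uniform-radius step, i.e. checking that the neighborhood of $\overline{D}$ on which $f_{k_{0}}(\cdot,t)$ is univalent can be taken independent of $t\in[0,T_{0}]$; the rest is a bookkeeping application of Theorem~\ref{twins}, whose proof already contains the existence statement via the polynomial-approximation construction of Theorem~\ref{Main Lemma}. One small point worth recording is that $\|\cdot\|_{\rho,1}$ only controls the coefficients of degree $\ge 1$, so the hypothesis is to be read together with the standing normalizations $f(0,0)=0$, $f'(0,0)>0$ of an admissible initial function for (\ref{PGS}); these are precisely the normalizations under which Theorem~\ref{twins} is stated.
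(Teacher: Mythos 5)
Your proposal is correct and follows the same route as the paper: the paper's proof likewise consists of (a) asserting a uniform $r>1$ with $f_{k_{0}}(\cdot,t)\in O(\overline{D_{r}})$ on $[0,T_{0}]$ and (b) invoking Theorem~\ref{twins}. Your compactness argument merely fills in the detail the paper leaves unproved in step (a), so there is no substantive difference.
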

\begin{proof}
(a)There exists $r>1$ such that $f_{k_{0}}(\xi,t)\in O(\overline{D_{r}})$ for all $0<t<T_{0}$.\\
(b)By Theorem~\ref{twins}, we are done with the proof.
\end{proof}
\begin{thm}
If the initial domain is close to a disk, then most of fluid is sucked before the corresponding strong solution to (\ref{PGS}) blows up.
\end{thm}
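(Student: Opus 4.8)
The plan is to combine Lemma~\ref{suction}, applied with $k_{0}=1$, with the elementary fact that under suction the fluid area decreases at the constant rate $2\pi$. Write the disk as $f_{1}(\xi,0)=a\xi$ with $a>0$; this is, up to the parameter $a$, the only degree-$1$ polynomial in $O(\overline{D})$. One checks directly from (\ref{PGS}) that its strong solution is $f_{1}(\xi,t)=\sqrt{a^{2}-2t}\,\xi$, which lies in $O(\overline{D_{r}})$ for every $r>1$ as long as $t<a^{2}/2$ and ceases to exist at $b:=a^{2}/2$; since the corresponding domain $\Omega_{1}(t)=f_{1}(D,t)$ has area $\pi a^{2}-2\pi t$, for the disk itself all of the fluid is sucked out exactly at the blow-up time $b$. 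Thus $f_{1}$ is an admissible choice of $f_{k_{0}}$ in Lemma~\ref{suction} with $b=a^{2}/2<\infty$.

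To state precisely what is to be proved, I would fix $a>0$ and $\varepsilon\in(0,1)$ and seek $\rho>1$ and $\delta>0$ such that every analytic $f(\cdot,0)$ with $f(0,0)=0$, $f'(0,0)>0$ and $\|f(\cdot,0)-a\xi\|_{\rho,1}<\delta$ generates a strong solution $f(\xi,t)$ of (\ref{PGS}) that ceases to exist at some finite $b'$ with $|\Omega(b')|<\varepsilon\,|\Omega(0)|$, where $\Omega(t)=f(D,t)$. Two facts drive the argument. First, differentiating the area formula $|\Omega(t)|=\tfrac{1}{2i}\oint_{\partial D}\overline{f}\,f'\,d\xi$, integrating by parts, and using (\ref{PGS}) gives $\tfrac{d}{dt}|\Omega(t)|=\int_{0}^{2\pi}\mathrm{Re}\big[f_{t}\overline{f'\xi}\big]\,d\theta=-2\pi$, hence $|\Omega(t)|=|\Omega(0)|-2\pi t$ on the whole existence interval; since a nonempty domain has positive area, the solution cannot survive past $t^{*}:=|\Omega(0)|/(2\pi)$, so it does blow up, at some $b'\in(0,t^{*}]$. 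Second, since $\rho^{j}j^{3/2}\ge1$ for $j\ge1$, the bound $\|v\|_{\rho,1}<\delta$ with $v:=f(\cdot,0)-a\xi$ forces $\sup_{\overline{D}}|v'|\le\sum_{j}j|v_{j}|<\delta$, whence $\big|\,|\Omega(0)|-\pi a^{2}\,\big|\le\int_{D}\big(2a|v'|+|v'|^{2}\big)\,dA\le\pi(2a\delta+\delta^{2})=:\kappa(\delta)\to0$ as $\delta\to0$.

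The argument then runs as follows. Given $\varepsilon$, pick $s>0$ with $2\pi s<\tfrac{\varepsilon}{8}\pi a^{2}$ and set $T_{0}:=a^{2}/2-s\in(0,b)$. Lemma~\ref{suction}, applied with $f_{k_{0}}=a\xi$ and this $T_{0}$, yields $\rho>1$ and $\delta_{0}>0$ such that $\|f(\cdot,0)-a\xi\|_{\rho,1}<\delta_{0}$ implies the strong solution $f(\xi,t)$ exists on $[0,T_{0}]$. Shrink $\delta:=\min\{\delta_{0},\delta_{1}\}$, where $\delta_{1}$ is chosen so that $\kappa(\delta_{1})<\tfrac{\varepsilon}{8}\pi a^{2}$ and $\kappa(\delta_{1})<\tfrac{1}{4}\pi a^{2}$. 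Then for $f(\cdot,0)$ as above with $\|f(\cdot,0)-a\xi\|_{\rho,1}<\delta$, the solution exists at least on $[0,T_{0}]$, its area decreases at rate $2\pi$, and so at the blow-up time $b'\ge T_{0}$,
\[
|\Omega(b')|\le|\Omega(T_{0})|=|\Omega(0)|-2\pi T_{0}\le\big(\pi a^{2}+\kappa(\delta)\big)-\big(\pi a^{2}-2\pi s\big)=\kappa(\delta)+2\pi s<\tfrac{\varepsilon}{4}\pi a^{2},
\]
while $|\Omega(0)|\ge\pi a^{2}-\kappa(\delta)\ge\tfrac{3}{4}\pi a^{2}$, giving $|\Omega(b')|<\tfrac{\varepsilon}{3}\,|\Omega(0)|<\varepsilon\,|\Omega(0)|$. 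Thus at most an $\varepsilon$-fraction of the fluid remains when the solution blows up, i.e. most of it has been sucked.

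I do not expect a genuine obstacle here: all of the analysis is carried by Lemma~\ref{suction} and, behind it, Theorem~\ref{twins}/Theorem~\ref{Main Lemma}, which are already established. The only points requiring any care are the comparison between the perturbation norm $\|\cdot\|_{\rho,1}$ and the area functional (handled above by the crude bound $\sup_{\overline{D}}|v'|<\delta$) and the area-dissipation identity for (\ref{PGS}); no compactness or PDE estimate beyond what Theorem~\ref{twins} already supplies is needed.
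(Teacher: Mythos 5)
Your proposal is correct and follows essentially the same route as the paper: apply Lemma~\ref{suction} with $k_{0}=1$ to the shrinking-disk solution and take $T_{0}$ close to the emptying time $b$. The only difference is that you make explicit the bookkeeping the paper leaves implicit, namely the area-dissipation identity $|\Omega(t)|=|\Omega(0)|-2\pi t$ and the control of $|\Omega(0)|$ by the perturbation norm $\|\cdot\|_{\rho,1}$, which turns the paper's qualitative conclusion into a quantitative one.
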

\begin{proof}
Assume that the disk is with area $\pi$ and therefore the conformal mapping is $f_{1}(\xi,0)=\xi$. The strong solution to (\ref{PGS}) is $f_{1}(\xi,t)=\sqrt{1-2t}\xi$ and the fluid is sucked out as $t=b=\frac{1}{2}$.\par
For $T_{0}<b$, we apply Lemma~\ref{suction} and obtain that  there exist $\rho>1$ and $\delta>0$ such that, if $\|f(\xi,0)-f_{1}(\xi,0)\|_{\rho,1}<\delta$, then the solution $f(\xi,t)$ to (\ref{PGS})  exists for $0<t<T_{0}$. If $b-T_{0}$ is small, the results show that most of fluid will be sucked before the strong solution $f(\xi,t)$ blows up.
\end{proof}

\section{Application-Large-time rescaling behaviors for large data and moments in the injection case}
\label{sec5}
In Richardson~\cite{richardson}, given $\Omega(t)$ which solves the Hele-Shaw  problem  with injection, the Richardson complex moments $\{M_{k}(t)\}_{k\geq 0}$ are defined by
\[M_{k}(t)=\frac{1}{\pi}\int_{\Omega(t)}z^{k}dxdy,\quad z=x+iy.\]
The quantity $M_{0}(t)\pi=\sqrt{2t+M_{0}(0)}\pi$ is the area of $\Omega(t)$ and $M_{k}(t), k\geq 1$ are conserved. Denote $\Omega'(t)=\{\frac{x}{\sqrt{2t+M_{0}(0)}}\mid x\in\Omega(t)\}$ which has area $\pi$ always. \par
Recall the definition of a strongly starlike function as in Gustafsson, Prokhorov and Vasil'ev~\cite{gustaf2} and Pommerenke~\cite{pomm}. A function $f\in O(D)$ is said to be  strongly starlike if there exists $\alpha\in (0,1]$ such that 
\[\left|\arg\frac{\xi f^{'}(\xi)}{f(\xi)}\right|<\alpha\frac{\pi}{2}, \quad\xi\in D.\]
Such a  function is also called a strongly starlike function of order $\alpha$.\par
In the case that  $\Omega(t)=f(D,t)$ where $f(\xi,t)$ is a global strong solution and is strongly starlike for $t\geq T_{0}$, $\partial\Omega'(t), t\geq T_{0}$ can be expressed by  a polar coordinate equation $(1+\overline{r}_{f}(t,\theta),\theta)$ for some $\overline{r}_{f}(t,\cdot):S^{1}\rightarrow [-1,\infty)$. The function $\overline{r}_{f}(t,\theta)$ satisfies
\[\overline{r}_{f}(t,\theta)=\frac{\left| f(\xi,t)\right|}{\sqrt{2t+M_{0}(0)}}-1,\quad t\geq T_{0}\]
where $\theta=\arg\frac{f(\xi,t)}{\mid f(\xi,t)\mid}$ for $\xi$ on $\partial D$. The value $\overline{r}_{f}(t,\theta)$ is well-defined if the function $f(\xi,t)$ is strongly starlike.\par
Define $M_{k}(f), k\geq 1$ to be the moments corresponding to the moving domain $\Omega(t)=f(D,t)$ where  $f(\xi,t)$ is a strong solution to (\ref{PG1}). In this section, we aim to prove Theorem~\ref{thm3.9} as follows:
\begin{thm}\label{thm3.9}
Given a global strong degree $k_{0}$ polynomial solution to $(\ref{PG1})$ $\{f_{k_{0}}(\xi,t)\}_{t\geq 0}$.\\
(a)There exist $\rho(f_{k_{0}})>1, \delta(f_{k_{0}})>0, T_{0}(f_{k_{0}})>0$ such that if $\|f(\cdot,0)-f_{k_{0}}(\cdot,0)\|_{\rho,3}<\delta$, then the strong solution to $(\ref{PG1})$ $f(\xi,t)$  is global and is a family of strongly starlike functions of order $<1$ for $t\geq T_{0}$.\\
(b)If $n_{0}=\min\{k\geq 1|M_{k}(f)\neq 0\}$, then
\[\lim_{T_{0}\leq t\rightarrow\infty}\|\overline{r}_{f}(t,\cdot)\|_{C^{2,\alpha}(S^{1})}(t)^{\lambda}=0,\quad \forall\lambda\in \left(0,1+\frac{n_{0}}{2}\right),\]
where $\overline{r}_{f}(t,\theta)=\frac{\mid f(\xi,t)\mid}{\sqrt{2t+M_{0}(0)}}-1$ and $\theta=\arg f(\xi,t)$, which are well-defined for $t\geq T_{0}$.
\end{thm}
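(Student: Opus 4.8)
The plan is to combine the local-in-time perturbation control from Theorem~\ref{twins} with the known global theory for strongly starlike data in Gustafsson, Prokhorov and Vasil'ev~\cite{gustaf2} and the large-time rescaling result for perturbed disks in Vondenhoff~\cite{vondenhoff}, using the polynomial baseline $f_{k_0}$ as a scaffold. First I would recall from Lin~\cite{Lin} (and the general structure of global polynomial solutions) that the global strong degree $k_0$ polynomial solution $f_{k_0}(\xi,t)$, after rescaling to area $\pi$, converges in $C^{2,\alpha}$ to the unit disk as $t\to\infty$; in particular there is a time $T_1$ and a radius $r>1$ such that $f_{k_0}(\xi,T_1)\in O(\overline{D_r})$ and the rescaled image $f_{k_0}(D,T_1)/\sqrt{2T_1+M_0(0)}$ is so close to the unit disk that it is strongly starlike of some order $\alpha<1$ \emph{with room to spare}; quantify this as an explicit $C^{2,\alpha}$-smallness threshold $\sigma$ on $\overline{r}$.

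The core step is then to invoke Theorem~\ref{twins} with $k_0$ as given, $T_0 := T_1$, the chosen $r>1$, some $1<r'<r$, and $\epsilon$ chosen small enough (in the $|\cdot|^{(n)}_{M(r)}$, $n\le 3$ sense, which controls $f$, $f'$, $f''$, $f'''$ uniformly on $\overline{D_{r'}}$ and hence controls $f(D,t)$ in $C^{2,\alpha}$ up to and including time $T_1$) that the rescaled domain $f(D,T_1)/\sqrt{2T_1+M_0(0)}$ lies within $\sigma/2$ of the unit disk in $C^{2,\alpha}$. Theorem~\ref{twins} supplies the corresponding $\rho>1$ and $\delta>0$; this is exactly the $\rho(f_{k_0}),\delta(f_{k_0})$ of the statement, and $T_0(f_{k_0}):=T_1$. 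For such perturbations, $f(\xi,t)$ is a strong solution to (\ref{PG1}) on $[0,T_1]$, and $f(\xi,T_1)$ is strongly starlike of order $<1$. Since strong starlikeness of the initial function is, by~\cite{gustaf2}, sufficient for the strong solution to be global and to remain strongly starlike (of nonincreasing order), restarting the flow at time $T_1$ with initial data $f(\xi,T_1)$ yields a global strong solution which is a family of strongly starlike functions of order $<1$ for all $t\ge T_1=T_0$. This proves (a).

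For (b), once $f(\xi,t)$ is global and strongly starlike for $t\ge T_0$, the rescaled boundary $\partial\Omega'(t)$ is the graph $(1+\overline{r}_f(t,\theta),\theta)$ with $\overline{r}_f$ well-defined as stated, and the Richardson moments $M_k(f)$, $k\ge1$, are conserved along the flow. The domain at time $T_0$ is a small $C^{2,\alpha}$-perturbation of a disk of area $\sqrt{2T_0+M_0(0)}\,\pi$, so Vondenhoff's two-dimensional rescaling theorem~\cite{vondenhoff} applies from time $T_0$ onward and gives precisely the algebraic decay
\[
\lim_{T_0\le t\to\infty}\|\overline{r}_f(t,\cdot)\|_{C^{2,\alpha}(S^1)}\,t^{\lambda}=0,\qquad \forall\,\lambda\in\Bigl(0,1+\tfrac{n_0}{2}\Bigr),
\]
where the order of decay is governed by the index $n_0=\min\{k\ge1:M_k(f)\neq0\}$ of the first nonvanishing moment; here one uses that the moments of $\Omega'(t)$ are the invariant $M_k(f)$ and that they control the decay rate in Vondenhoff's result (the threshold exponent jumps from $1$ to $1+n_0/2$ exactly when the lower moments vanish).

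The main obstacle I anticipate is the quantitative matching in the core step: one must show that smallness of $|f^{(n)}-f_{k_0}^{(n)}|_{M(r)}$ for $n\le 3$ on $\overline{D_{r'}}$ genuinely translates into $C^{2,\alpha}(S^1)$-closeness of the rescaled free boundaries, and that the strong-starlikeness "margin" of $f_{k_0}(D,T_1)$ (order bounded strictly below $1$, and the $\arg$-inequality strict with a uniform gap) survives this perturbation — this is where the choice of $\epsilon$, and implicitly of $T_1$, must be made carefully, since strong starlikeness of order exactly $1$ would not be enough to trigger the global theory of~\cite{gustaf2}. A secondary technical point is checking that Vondenhoff's hypotheses (his notion of closeness to a ball, and regularity class) are implied by the $C^{2,\alpha}$-smallness we produce at time $T_0$; this should be routine but must be stated.
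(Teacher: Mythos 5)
Your proposal follows essentially the same route as the paper: pick $T_{0}$ large so that the polynomial solution is nearly a disk after rescaling, apply Theorem~\ref{twins} with $k=3$ on $[0,T_{0}]$, conclude that $f(\xi,T_{0})$ is strongly starlike of order $<1$ with small $\overline{r}_{f}(T_{0},\cdot)$, invoke Gustafsson--Prokhorov--Vasil'ev~\cite{gustaf2} for globality and persistence of starlikeness, and then Vondenhoff~\cite{vondenhoff} (after renormalizing, as in Lemma~\ref{cor3.6}) for the moment-determined decay. The quantitative matching you flag as the main obstacle is exactly what the paper settles in Lemma~\ref{Lemma3.7}: instead of $C^{2,\alpha}$-closeness of rescaled boundaries with a starlikeness margin, it uses the coefficient conditions $\left|f^{(j)}/a_{1}\right|_{M}<\epsilon'$ for $j=2,3$ (combined with Pommerenke's criterion $\sum_{n\geq 2}n|a_{n}|<|a_{1}|$ and the growth bound $a_{1}(t)\geq\sqrt{2t+a_{1}^{2}(0)}$ from~\cite{kuz}), which simultaneously yields strong starlikeness of order $<1$ and smallness of $\|\overline{r}_{f}(T_{0},\cdot)\|_{C^{2,\alpha}(S^{1})}$.
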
 \par
The proof of Theorem~\ref{thm3.9} is given in subsection~\ref{proof1}. A geometric characterization of results in Theorem~\ref{thm3.9} is given in subsection \ref{characterization1}.
\subsection{Proofs for Theorem~\ref{thm3.9} }
\label{proof1}
We start with lemmas before the proof of Theorem~\ref{thm3.9}.
\begin{lemma}
\label{cor3.6}
Given a global strong solution $f(\xi,t)$ which is strongly starlike of order $<1$. There exists $\delta'>0$, such that if $\|\overline{r}_{f}(0,\cdot)\|_{C^{2,\alpha}(S^{1})}<\delta'$, then 
\[\limsup_{t\rightarrow\infty}\|\overline{r}_{f}(t,\cdot)\|_{C^{2,\alpha}(S^{1})}(2t)^{\lambda}=0,\quad\forall\lambda\in \left(0,1+\frac{n_{0}}{2}\right)\]
where $n_{0}=\min\{k\geq 1\mid M_{k}(f)\not=0\}$.
\end{lemma}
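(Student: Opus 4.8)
The plan is to derive Lemma~\ref{cor3.6} directly from the large-time rescaling theorem of Vondenhoff~\cite{vondenhoff} for Hele-Shaw flows started from domains close to balls, specialized to two dimensions, after recasting the hypothesis in the language used there.

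First I would observe that the hypotheses place $\Omega'(0)$ --- and hence $\Omega(0)$ after the area-normalizing dilation $z\mapsto z/\sqrt{2t+M_{0}(0)}$ --- in the perturbative regime of \cite{vondenhoff}. Indeed, since $f(\cdot,t)$ is assumed strongly starlike of order $<1$, for every $t\geq 0$ the rescaled boundary $\partial\Omega'(t)$ is a radial graph $\theta\mapsto(1+\overline{r}_{f}(t,\theta),\theta)$ over $S^{1}$, so that $\|\overline{r}_{f}(0,\cdot)\|_{C^{2,\alpha}(S^{1})}<\delta'$ says exactly that $\Omega'(0)$ is a $C^{2,\alpha}$-small normal perturbation of the unit disk. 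For $\delta'$ small enough this is an admissible initial datum for Vondenhoff's theorem, and since global existence is already part of the hypothesis (and starlikeness is preserved by the injection flow, cf.~\cite{gustaf2}), the solution stays in that perturbative regime and $\overline{r}_{f}(t,\cdot)$ is well defined for all $t\geq 0$.

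Next I would match the invariants. The area is $M_{0}(t)\pi=(2t+M_{0}(0))\pi$, the complex moments $M_{k}(f)=M_{k}(0)$ with $k\geq 1$ are conserved, and the dilation by $\sqrt{2t+M_{0}(0)}$ rescales the $k$-th moment by the factor $(2t+M_{0}(0))^{-(k+2)/2}$. Identifying these (up to fixed nonzero multiplicative constants) with the conserved quantities appearing in \cite{vondenhoff}, the asymptotic expansion there shows that the dominant contribution to $\overline{r}_{f}(t,\cdot)$ is carried by the Fourier mode of order $n_{0}=\min\{k\geq 1\mid M_{k}(f)\neq 0\}$ and that $\|\overline{r}_{f}(t,\cdot)\|_{C^{2,\alpha}(S^{1})}$ decays like $(2t)^{-(1+n_{0}/2)}$; multiplying by $(2t)^{\lambda}$ with $\lambda<1+n_{0}/2$ then yields a quantity tending to $0$, which is the assertion.

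The main obstacle I expect is the bookkeeping needed to pass between the two formulations: \cite{vondenhoff} is stated in a multi-dimensional setting with its own coordinates, function spaces, and basis of conserved quantities, whereas everything here is phrased through the conformal map $f$, the Richardson complex moments $M_{k}$, and the polar graph function $\overline{r}_{f}$. Concretely I would need to verify (i) that $C^{2,\alpha}(S^{1})$-smallness of $\overline{r}_{f}(0,\cdot)$ implies smallness in whatever norm Vondenhoff imposes on the initial perturbation --- in two dimensions the H\"older scale is comfortably inside the regularity required, so an embedding/interpolation argument should do; (ii) that the $k$-th conserved quantity of \cite{vondenhoff} agrees, up to a nonzero constant, with $M_{k}(f)$, so that ``first nonvanishing'' means the same in both languages; and (iii) that the decay exponent is exactly $1+n_{0}/2$ in the $(2t)$-scale used here. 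No continuation or bootstrap in time is needed beyond what \cite{vondenhoff} supplies, since global existence is assumed; the content of the lemma is purely the rate.
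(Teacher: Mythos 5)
Your proposal takes essentially the same route as the paper: the paper proves the lemma by rescaling $g(\xi,\tau)=f(\xi,t)/\sqrt{M_{0}(0)}$ with $\tau=2\pi t/M_{0}(0)$ so that the initial area is $\pi$, noting that analyticity of the moving boundary places $\overline{r}_{g}(0,\cdot)$ in the little H\"older space $h^{2,\alpha}(S^{1})$ required by Vondenhoff, and then quoting his Theorems 3.3 and 4.3 --- exactly the reduction you describe. The bookkeeping items you flag (norm compatibility, matching of the conserved moments, and the exponent $1+\frac{n_{0}}{2}$ in the $(2t)$-scale) are precisely what this explicit normalization settles, so your argument is correct and coincides with the paper's.
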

\begin{proof}
Let $g(\xi,\tau)=\frac{f(\xi,t)}{\sqrt{M_{0}(0)}}$ where $\tau=\frac{2\pi t}{M_{0}(0)}$. Then
\[Re\left[g_{\tau}\overline{g^{'}\xi}\right]=\frac{1}{2\pi}, \xi\in D\mbox{\quad and}\quad \left|g(D,0)\right|=\pi.\]
Since the boundary of $g(D,\tau)$ is analytic, then $\overline{r}_{g}(t,\cdot)\in h^{2,\alpha}(S^{1})$ where $h^{2,\alpha}(S^{1})$ is the little H$\ddot{o}$lder space as defined in Vondenhoff~\cite{vondenhoff}. Then by Theorem 3.3 and Theorem 4.3 in Vondenhoff~\cite{vondenhoff}, we obtain that there exists $\delta'>0$ such that if $\|\overline{r}_{g}(0,\cdot)\|_{C^{2,\alpha}(S^{1})}<\delta'$, then
\[\limsup_{\tau\rightarrow\infty}\|\overline{r}_{g}(\tau,\cdot)\|_{C^{2,\alpha}(S^{1})}(2\tau)^{\lambda}=0,\quad\forall\lambda\in \left(0,1+\frac{n_{0}}{2}\right)\]
where $n_{0}=\min\{k\geq 1\mid M_{k}(f)\not=0\}=\min\{k\geq 1\mid M_{k}(g)\not=0\}$. Here $\|\overline{r}_{f}(t,\cdot)\|_{C^{2,\alpha}(S^{1})}=\|\overline{r}_{g}(\tau,\cdot)\|_{C^{2,\alpha}(S^{1})}$. Therefore, we conclude that if $\|\overline{r}_{f}(0,\cdot)\|_{C^{2,\alpha}(S^{1})}<\delta'$, 
\[\limsup_{t\rightarrow\infty}\|\overline{r}_{f}(t,\cdot)\|_{C^{2,\alpha}(S^{1})}(2t)^{\lambda}=0,\quad\forall\lambda\in \left(0,1+\frac{n_{0}}{2}\right)\]
where $n_{0}=\min\{k\geq 1\mid M_{k}(f)\not=0\}$.
\end{proof}
\begin{lemma}\label{Theorem2.6}
Given a global strong degree $k_{0}$ polynomial solution $f_{k_{0}}(\xi,t)$ to $(\ref{PG1})$, then there exists $r>1$ such that for $t\geq 0$,
\[f_{k_{0}}(\xi,t)\in O(\overline{D_{r}}).\]
Also given $\epsilon>0, T_{0}>0, k\in N$ and $1<r^{'}<r$, there exist $\delta(f_{k_{0}},T_{0},\epsilon,k, r')>0$ and $\rho(f_{k_{0}},T_{0},\epsilon,k,r')>1$ such that if 
$\|f(\cdot,0)-f_{k_{0}}(\cdot,0)\|_{\rho,k}<\delta$ where $f(0,0)=0$ and $f^{'}(0,0)>0$,  then the strong solution $f(\xi,t)$ to (\ref{PG1}) satisfies
\[f(\xi,t)\in O(\overline{D_{r^{'}}})\cap C^{1}([0,T_{0}],H(D_{r})),\]
and for $0\leq n\leq k$, $0\leq t\leq T_{0},$
\[\left| f_{k_{0}}^{(n)}(\cdot,t)-f^{(n)}(\cdot,t)\right|_{M(r)}<\epsilon.\]
\end{lemma}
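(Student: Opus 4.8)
The plan is to reduce this statement to Theorem~\ref{twins}, which is essentially a restatement of the same conclusion but with the hypothesis that $f_{k_0}(\xi,t) \in O(\overline{D_r})$ on a \emph{compact} time interval $[0,T_0]$. The only new content here is that a \emph{global} strong degree $k_0$ polynomial solution automatically lies in $O(\overline{D_r})$ for a single fixed $r > 1$, uniformly for all $t \geq 0$; once that is established, the rest is a direct invocation of Theorem~\ref{twins} with that $r$ and the given $T_0$. So first I would prove the uniform-$r$ claim, and then assemble the conclusion.

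For the first step, recall from Gustafsson~\cite{gustaf1} that a strong polynomial solution stays a polynomial of fixed degree $k_0$, so $f_{k_0}(\xi,t) = \sum_{i=1}^{k_0} a_i(t)\xi^i$ with $a_1(t) > 0$ and the $a_i(t)$ continuous (indeed $C^1$) in $t$. Since the solution is global, one should use the known large-time asymptotics of global strong polynomial solutions — as recorded in Lin~\cite{Lin} and Gustafsson, Prokhorov and Vasil'ev~\cite{gustaf2} — to see that, after the standard rescaling, $f_{k_0}(\xi,t)$ converges to the identity map $\xi$; in particular $a_1(t) \to \infty$ like $\sqrt{2t}$ while the ratios $a_i(t)/a_1(t) \to 0$ for $i \geq 2$. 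Consequently the rescaled map $f_{k_0}(\xi,t)/a_1(t)$ converges in, say, the $M(r_1)$-norm for any fixed $r_1 > 1$ to $\xi$, which is univalent on all of $\mathbb{C}$; hence for large $t$ the rescaled map is univalent on $\overline{D_{r_1}}$, and univalence is scale-invariant, so $f_{k_0}(\xi,t) \in O(\overline{D_{r_1}})$ for $t \geq T_1$. On the remaining compact interval $[0,T_1]$, each $f_{k_0}(\cdot,t)$ is univalent on $\overline{D}$, hence on $\overline{D_{r(t)}}$ for some $r(t) > 1$, and by continuity of the coefficients plus a standard compactness argument (a locally univalent function close to a univalent one is univalent — in the spirit of Lemma~\ref{univalence}) one gets a uniform $r_0 > 1$ working for all $t \in [0,T_1]$. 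Taking $r = \min\{r_0, r_1\} > 1$ handles all $t \geq 0$.

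With this $r > 1$ fixed, the second step is immediate: given $\epsilon > 0$, $T_0 > 0$, $k \in \mathbb{N}$ and $1 < r' < r$, the hypothesis "$f_{k_0}(\xi,t) \in O(\overline{D_r})$ for $t \in [0,T_0]$" of Theorem~\ref{twins} holds, so Theorem~\ref{twins} supplies $\delta(f_{k_0},T_0,\epsilon,k,r') > 0$ and $\rho(f_{k_0},T_0,\epsilon,k,r') > 1$ such that $\|f(\cdot,0) - f_{k_0}(\cdot,0)\|_{\rho,k} < \delta$ forces $f(\xi,t) \in O(\overline{D_{r'}}) \cap C^1([0,T_0], H(D_r))$ together with the derivative estimate $|f_{k_0}^{(n)}(\cdot,t) - f^{(n)}(\cdot,t)|_{M(r)} < \epsilon$ for $0 \leq n \leq k$, $0 \leq t \leq T_0$. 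This is exactly the asserted conclusion.

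The main obstacle is entirely in the first step: extracting a \emph{single} $r > 1$ that works uniformly over the infinite time interval $[0,\infty)$. The compact-interval part is routine, but the large-time part genuinely requires knowing that the rescaled global polynomial solution does not degenerate — i.e., that its region of univalence does not shrink to $\overline{D}$ as $t \to \infty$. I would lean on the explicit asymptotic description of global strong polynomial solutions from Lin~\cite{Lin} (convergence to the identity after rescaling, with algebraic rates) to control $\min_{\overline{D_{r_1}}}|f_{k_0}'(\xi,t)|$ away from zero; once the derivative is bounded below uniformly on $\overline{D_{r_1}}$ for large $t$, local univalence on $\overline{D_{r_1}}$ is stable and, combined with the global (scale-invariant) univalence on $\overline{D}$, yields univalence on $\overline{D_{r_1}}$. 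If one prefers to avoid citing the fine asymptotics, an alternative is a soft argument: the coefficients $a_i(t)$ of a global solution are real-analytic and, after rescaling, form a precompact family whose only possible limits are univalent on some disk strictly larger than $D$ (again by a normal-families/Hurwitz argument), which also produces a uniform $r_1$; either route closes the gap.
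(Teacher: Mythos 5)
Your proposal follows the paper's own route exactly: establish a single $r>1$ with $f_{k_{0}}(\cdot,t)\in O(\overline{D_{r}})$ for all $t\geq 0$, then invoke Theorem~\ref{twins} on $[0,T_{0}]$ to get $\delta$, $\rho$ and the stated conclusions. The paper simply asserts the uniform-$r$ step without argument, whereas you justify it via the large-time coefficient asymptotics of global strong polynomial solutions (rescaled convergence to the identity, so univalence on a fixed $\overline{D_{r_{1}}}$ for large $t$) together with a continuity--compactness argument on the remaining compact time interval; this is a correct way to fill in that step and does not change the overall approach.
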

\begin{proof}
(a)There exists $r>1$ such that $f_{k_{0}}(\xi,t)\in O(\overline{D_{r}})$ for all $t>0$.\\
(b)By Theorem~\ref{twins}, we are done with the proof. \end{proof}
\begin{lemma}\label{Lemma3.7}
Define $M_{0}\pi$ as the area of $f(D)$ for some $f(\xi)=\sum_{i=1}^{\infty}a_{i}\xi^{i}$ in $O(\overline{D})$. Given $\delta'>0$, there exists $\epsilon'>0$ such that if $\mid\frac{f^{(j)}}{a_{1}}\mid_{M}<\epsilon'$  for $2\leq j\leq 3$, then $f(\xi)$ is strongly starlike of order $<1$ and $\|\overline{r}_{f}\|_{C^{2,\alpha}(S^{1})}<\delta'$ where $\overline{r}_{f}(\theta)=\frac{\mid f(\xi)\mid}{\sqrt{M_{0}}}-1$ and $\theta=\arg f(\xi)$.\end{lemma}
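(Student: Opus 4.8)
\textbf{Proof proposal for Lemma~\ref{Lemma3.7}.}

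The plan is to derive everything from a single smallness input, namely that $\big|\frac{f^{(j)}}{a_1}\big|_M < \epsilon'$ for $j=2,3$, by comparing $f$ with the identity map $\mathrm{id}(\xi)=a_1\xi$ and controlling the relevant geometric quantities via the $|\cdot|_M$ norm, which dominates the sup-norm on $\overline{D}$. First I would normalize: since $\overline{r}_f$ and the starlikeness quantity $\arg\frac{\xi f'(\xi)}{f(\xi)}$ are invariant under $f\mapsto f/a_1$, I may assume $a_1=1$, so that $|f''|_M<\epsilon'$ and $|f'''|_M<\epsilon'$. Integrating, $f'(\xi)=1+\int_0^\xi f''$, so $|f'(\xi)-1|\le |f''|_M<\epsilon'$ on $\overline D$, and similarly $|f(\xi)-\xi|\le \frac12|f''|_M<\epsilon'$, and $|f'(\xi)-1|$, $|f''(\xi)|$ are all $O(\epsilon')$ uniformly on $\overline D$. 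In particular $f'$ is uniformly close to $1$ and bounded away from $0$, and $f(\xi)/\xi$ is uniformly close to $1$ and bounded away from $0$, so $\frac{\xi f'(\xi)}{f(\xi)}$ lies in a small disk around $1$; choosing $\epsilon'$ small makes $\big|\arg\frac{\xi f'(\xi)}{f(\xi)}\big|$ bounded by any prescribed $\alpha\pi/2$ with $\alpha<1$ on $D$, giving strong starlikeness of order $<1$. (Local univalence, hence membership in $O(D)$, follows from $\mathrm{Re}\,f'>0$; univalence on $\overline D$ follows e.g. from the same starlikeness argument applied on $\overline D$, or from closeness to $\mathrm{id}$ in the spirit of Lemma~\ref{univalence}.)

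Next I would handle the $C^{2,\alpha}(S^1)$ estimate on $\overline r_f$. By the strong starlikeness just established, for each $\theta$ there is a unique $\xi=\xi(\theta)\in\partial D$ with $\arg f(\xi)=\theta$, and $\overline r_f(\theta)=|f(\xi(\theta))|/\sqrt{M_0}-1$. The map $\theta\mapsto\xi(\theta)$ is the inverse of $\xi\mapsto\arg f(\xi)$ on $\partial D$; since $\frac{d}{d(\arg\xi)}\arg f(\xi)=\mathrm{Re}\,\frac{\xi f'(\xi)}{f(\xi)}$ is within $O(\epsilon')$ of $1$ and the second derivative of $\arg f$ along $\partial D$ is $O(\epsilon')$ as well (it involves $f''$), the inverse function theorem on the circle gives that $\xi(\theta)$ is $C^{2,\alpha}$-close to the identity reparametrization, with all relevant derivatives up to order two (and their $\alpha$-Hölder seminorms, by interpolation using the uniform bound on $f'''$) being $O(\epsilon')$. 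Composing with $|f(\cdot)|=\sqrt{f\overline f}$, which is a smooth function of $(f,f',f'')$ that equals $1$ identically when $f=\mathrm{id}$, a chain-rule bookkeeping shows $\||f(\xi(\cdot))|-1\|_{C^{2,\alpha}(S^1)}=O(\epsilon')$. Finally $M_0\to 1$ as $\epsilon'\to0$ (since $M_0=\sum i|a_i|^2$ and the tail is controlled by $|f''|_M$, while $|a_1-1|=0$ here), so $\sqrt{M_0}\to1$ and $\|\overline r_f\|_{C^{2,\alpha}(S^1)}=O(\epsilon')$. Choosing $\epsilon'$ small enough that all these $O(\epsilon')$ quantities fall below $\delta'$ and below the starlikeness threshold finishes the proof.

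The main obstacle I anticipate is the $C^{2,\alpha}$ bookkeeping in the second paragraph: one must pass from the analytic, easy-to-estimate quantity $f$ on $\overline D$ to the geometric quantity $\overline r_f$ defined via the \emph{nonlinear} angular reparametrization $\theta\mapsto\xi(\theta)$, and keep track of two derivatives plus a Hölder seminorm through that change of variables and through the modulus $|f|$. The cleanest way is probably to write $\overline r_f(\theta)+1 = |f(\xi(\theta))|/\sqrt{M_0}$ as $\Phi(\xi(\theta))$ with $\Phi$ analytic-ish near $\partial D$ and $\|\Phi-1\|_{C^{2,\alpha}}=O(\epsilon')$, prove $\|\xi(\cdot)-\xi_0(\cdot)\|_{C^{2,\alpha}(S^1)}=O(\epsilon')$ for the unperturbed parametrization $\xi_0$ by implicit-function-theorem / Banach-space-contraction arguments on $C^{2,\alpha}(S^1)$ (using that $C^{2,\alpha}$ is a Banach algebra and composition is continuous there), and then invoke continuity of composition in $C^{2,\alpha}$. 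Everything else is a routine consequence of the two elementary integral estimates $|f'-1|\le|f''|_M$ and $|f-\mathrm{id}|\le\frac12|f''|_M$ together with $|f'''|_M<\epsilon'$ supplying the needed control one order higher.
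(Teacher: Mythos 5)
Your proposal is correct and follows essentially the same route as the paper: exploit that $\left|\frac{f''}{a_1}\right|_M,\left|\frac{f'''}{a_1}\right|_M<\epsilon'$ forces the coefficient tail to be small, deduce strong starlikeness of order $<1$ (the paper quotes Pommerenke's sufficient coefficient condition $\sum_{n\ge2}n|a_n|<|a_1|$ where you estimate $\arg\frac{\xi f'(\xi)}{f(\xi)}$ directly, which is equivalent here), and then control $\overline{r}_f$ in $C^{2,\alpha}(S^1)$ by bounding its angular derivatives up to order three through the reparametrization $\theta\mapsto\xi(\theta)$, exactly as the paper does with its explicit formula for $\partial_\theta\overline{r}_f$ and the "similarly" for higher derivatives. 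Your inverse-function-theorem/composition framing is just a more structured way of doing the same chain-rule bookkeeping, so no gap.
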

\begin{proof}
If $\epsilon'<1$, then $|\frac{f''}{a_{1}}|_{M}<1$. This implies that $\sum_{n=2}^{\infty}n|a_{n}|<|a_{1}|$ which is a sufficient condition for coefficients of strongly starlike functions; see Pommerenke~\cite{pomm}.\par
Now we treat the quantity $\|\overline{r}_{f}\|_{C^{2,\alpha}(S^{1})}$ by calculating $\max_{\theta\in S^{1}}|\partial_{\theta}^{(j)}\overline{r}_{f}|, 0\leq j\leq 3$. Note that  the value $M_{0}$ can be represented by $a_{1}^2+\sum_{n=2}^{\infty}n|a_{n}|^2$. The function $\overline{r}_{f}$ satisfies
\[\max_{\theta\in S^{1}}|\overline{r}_{f}|\leq\left|\frac{a_{1}}{\sqrt{M_{0}}}-1\right|+\sum_{n=2}^{\infty}\left|\frac{a_{n}}{\sqrt{M_{0}}}\right|\]
which goes to zero as $\epsilon'$ goes to zero. The function $\partial_{\theta}\overline{r}_{f}$ satisfies
\[\max_{\theta\in S^{1}}|\partial_{\theta}\overline{r}_{f}|=\max_{\xi\in\partial D}\left|\frac{1}{Re\left[\frac{f^{'}\xi}{f}\right]}\frac{Im\left[\xi f^{'}\overline{f}\right]}{|f|\sqrt{M_{0}}}\right|\]
which goes to zero as $\epsilon'$ goes to zero. Similarly, $\max_{\theta\in S^{1}}|\partial_{\theta}^2 \overline{r}_{f}|$ and $\max_{\theta\in S^{1}}|\partial_{\theta}^3 \overline{r}_{f}|$ go to zero as $\epsilon'$ goes to zero. We conclude that $\|\overline{r}_{f}\|_{C^{2,\alpha}(S^{1})}$ goes to zero as $\epsilon'$ goes to zero.\par
Finally, there exists $0<\epsilon'<1$ such that the theorem holds. \par
\end{proof}
\begin{proof}{\textbf{(proof of Theorem~\ref{thm3.9})}}\\
(a)Denote 
\[f(\xi,t)=\sum_{i=1}^{\infty}b_{i}(t)\xi^{i};\quad f_{k_{0}}(\xi,t)=\sum_{i=1}^{k_{0}}a_{i}(t)\xi^{i}.\]
 Note that $b_{1}^2(t)\geq b_{1}^2(0)+2t$ and $a_{1}^2(t)\geq a_{1}^2(0)+2t$ as shown in~Kuznetsova~\cite{kuz}. We separate the proof for (a) into (1)-(5) as follows:\\
(1)There exists $\delta'>0$ as stated in Lemma~\ref{cor3.6}.\\
(2)For such $\delta'>0$, we can find $\epsilon'>0$ as stated in Lemma~\ref{Lemma3.7}.\\
(3)Given $\epsilon'>0$, there exists $T_{0}>\frac{1}{2}$ such that for $t\geq T_{0}$,
\begin{equation}
\label{assumption2}
\left| \frac{f_{k_{0}}^{(2)}(\cdot,t)}{a_{1}(t)}\right|_{M}<\frac{1}{8}\epsilon'\mbox{\quad and \quad}\left| \frac{f_{k_{0}}^{(3)}(\cdot,t)}{a_{1}(t)}\right|_{M}<\frac{1}{8}\epsilon'
\end{equation}
since the coefficients $\{a_{i}(t)\}_{i\geq 2}$ are bounded and $a_{1}(t)\geq\sqrt{2t+a_{1}^2(0)}$ as shown in Kuznetsova~\cite{kuz}.\\
(4)By Lemma~\ref{Theorem2.6}, for such $T_{0}$ and $\epsilon'$, there exist $\rho>1$ and $\delta>0$ such that if  $\|f(\cdot,0)-f_{k_{0}}(\cdot,0)\|_{\rho,3}<\delta$, then\\
(i)the strong solution to (\ref{PG1}) $f(\xi,t)$ exists for $t\in [0,T_{0}]$, and\\
(ii)for $0\leq t\leq T_{0},$ $1\leq j\leq 3$, 
\begin{equation}
\label{who}
\left| f_{k_{0}}^{(j)}(\cdot,t)-f^{(j)}(\cdot,t)\right|_{M}<\min\left\{\frac{1}{2}a_{1}(T_{0}), \frac{1}{8}\epsilon'\right\}.
\end{equation}
From (\ref{who}) and the fact that $T_{0}\geq 1$, we also can obtain that $b_{1}(T_{0})\geq\max\{1, \frac{1}{2}a_{1}(T_{0})\}$. Therefore, by $(\ref{assumption2})$, (\ref{who}) and the fact that $b_{1}(T_{0})\geq\max\{1, \frac{1}{2}a_{1}(T_{0})\}$, we have
\[\left|\frac{f^{(j)}(\cdot,T_{0})}{b_{1}(T_{0})}\right|_{M}\leq \frac{1}{2}\epsilon', \quad 2\leq j\leq 3.\]
Due to the fact in (2), $f(\xi,T_{0})$ is strongly starlike of order $<1$ and
\[\|\overline{r}_{f}(T_{0},\cdot)\|_{C^{2,\alpha}(S^{1})}<\delta',\]
where $\overline{r}_{f}(t,\theta)=\frac{\mid f(\xi,t)\mid}{\sqrt{M_{0}(t)}}-1$ and $\theta=\arg f(\xi, t)$.\\
(5)By (1)-(4), we conclude that there exist $T_{0}>0$, $\rho>1$, $\delta>0$ such that  if $\|f(\cdot,0)-f_{k_{0}}(\cdot,0)\|_{\rho,3}<\delta$, then\\
(i)the strong solution $f(\xi,t)$ exists for $t\in [0,T_{0}]$, and\\
(ii)$f(\xi,T_{0})\in O(\overline{D})$ is a strongly starlike function of order $<1$, and\\
(iii)$\|\overline{r}_{f}(T_{0},\cdot)\|_{C^{2,\alpha}(S^{1})}<\delta'.$\\
By Theorem 2.1 in Gustafsson, Prokhorov and Vasil'ev~\cite{gustaf2}, the solution $f(\xi,t)$ must be global and $f(\xi,t), t\geq T_{0}$ has  strictly decreasing strongly starlike order $\alpha(t)$ since $f(\xi,T_{0})\in O(\overline{D})$ and is a strongly starlike function. This also implies that $\overline{r}_{f}(t,\cdot)$ is well-defined for $t\geq T_{0}$.\\
(b)From (5), the assumptions in Lemma~\ref{cor3.6} are satisfied and we obtain 
\[\limsup_{T_{0}\leq t\rightarrow\infty}\|\overline{r}_{f}(t,\cdot)\|_{C^{2,\alpha}(S^{1})}(2t)^{\lambda}=0, \quad\forall\lambda\in \left(0,1+\frac{n_{0}}{2}\right).\]
\end{proof}
\subsection{Geometric meaning of rescaling behavior in Theorem~\ref{thm3.9}}
\label{characterization1}
The initial domains we consider  in this section are
\[\{f_{k_{0}}(D,0)\mid  \mbox{$f_{k_{0}}(\xi,t)$ is a global strong polynomial solution of degree $k_{0}\in N$}\}\]
and small perturbations of them. Theorem~\ref{thm3.9} demonstrates that starting with an initial domain $\Omega(0)$ as above,  we can obtain a global solution $\Omega(t)$ which is simply connected and has a real analytic boundary, and a rescaling behavior is given in terms of moments. Here we aim to give a geometric characterization for  this rescaling behavior by carrying out some explicit calculation:
\begin{thm}
\label{geometric}
Given a global strong solution $f(\xi,t)$ where $f(\xi,0)$ satisfies the assumption of Theorem~\ref{thm3.9} and $\Omega(t)=f(D,t)$.  We show that the rescaled domain $\Omega'(t)=\{x|x\sqrt{|\Omega(t)|/\pi}\in\Omega(t)\}$ has  radius satisfy that
\[\max_{z\in\partial\Omega^{'}(t)}\left|\left| z\right|-1\right|=o\left(\frac{1}{t}\right)^{\lambda},\quad\forall \lambda\in \left(0,1+\frac{n_{0}}{2}\right)\]
and its curvature $\kappa(t,z), z\in\Omega'(t)$ satisfies
\[\max_{z\in\Omega^{'}(t)}\left|\kappa(t,z)-1\right|=o\left(\frac{1}{t}\right)^{\lambda},\quad\forall \lambda\in \left(0,1+\frac{n_{0}}{2}\right),\]
where $n_{0}=\min\{k\geq 1|M_{k}(f)\neq 0\}$.
\end{thm}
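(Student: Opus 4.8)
The plan is to reduce the statement directly to Theorem~\ref{thm3.9}(b) together with the elementary formula for the curvature of a curve in polar coordinates. Recall from the discussion preceding Theorem~\ref{thm3.9} that, since $f(\xi,t)$ is strongly starlike of order $<1$ for $t\geq T_{0}$ (which holds by Theorem~\ref{thm3.9}(a)), the boundary $\partial\Omega'(t)$ of the rescaled domain is the graph, in polar coordinates, of $\theta\mapsto 1+\overline{r}_{f}(t,\theta)$, where $\overline{r}_{f}(t,\theta)=|f(\xi,t)|/\sqrt{2t+M_{0}(0)}-1$ and $\theta=\arg f(\xi,t)$ for $\xi\in\partial D$. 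Since $|\Omega(t)|/\pi=M_{0}(t)=2t+M_{0}(0)$, the domain $\Omega'(t)$ in the statement coincides with the one used there, and a point $z\in\partial\Omega'(t)$ satisfies $|z|=1+\overline{r}_{f}(t,\theta)$ and $\theta=\arg z$. Moreover a strong solution is analytic in a neighborhood of $\overline{D}$, so $\partial\Omega(t)$ is real analytic and $\overline{r}_{f}(t,\cdot)$ is a genuine $C^{2,\alpha}(S^{1})$ (indeed $h^{2,\alpha}(S^{1})$) function of $\theta$; hence all quantities below are well defined for $t\geq T_{0}$.

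For the radius estimate, the identity $|z|=1+\overline{r}_{f}(t,\theta)$ gives
\[
\max_{z\in\partial\Omega'(t)}\bigl||z|-1\bigr|=\|\overline{r}_{f}(t,\cdot)\|_{C^{0}(S^{1})}\leq\|\overline{r}_{f}(t,\cdot)\|_{C^{2,\alpha}(S^{1})},
\]
and by Theorem~\ref{thm3.9}(b) the right-hand side is $o\!\left(\tfrac1t\right)^{\lambda}$ for every $\lambda\in\left(0,1+\tfrac{n_{0}}{2}\right)$, which is the first assertion.

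For the curvature, write $r(\theta)=1+\overline{r}_{f}(t,\theta)$ and use the standard polar formula
\[
\kappa=\frac{r^{2}+2(r')^{2}-r\,r''}{\bigl(r^{2}+(r')^{2}\bigr)^{3/2}}=\Phi\bigl(\overline{r}_{f},\partial_{\theta}\overline{r}_{f},\partial_{\theta}^{2}\overline{r}_{f}\bigr),\qquad \Phi(a,b,c)=\frac{(1+a)^{2}+2b^{2}-(1+a)c}{\bigl((1+a)^{2}+b^{2}\bigr)^{3/2}}.
\]
The function $\Phi$ is smooth near $(0,0,0)$ with $\Phi(0,0,0)=1$, so there exist $\varepsilon_{0}>0$ and $L>0$ with $|\Phi(a,b,c)-1|\leq L(|a|+|b|+|c|)$ whenever $|a|,|b|,|c|\leq\varepsilon_{0}$. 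By Theorem~\ref{thm3.9}(b), $\|\overline{r}_{f}(t,\cdot)\|_{C^{2,\alpha}(S^{1})}\to 0$, so there is $T_{1}\geq T_{0}$ such that for $t\geq T_{1}$ the three arguments are bounded by $\varepsilon_{0}$ uniformly in $\theta$, and then
\[
\max_{z\in\partial\Omega'(t)}|\kappa(t,z)-1|\leq 3L\,\|\overline{r}_{f}(t,\cdot)\|_{C^{2}(S^{1})}\leq 3L\,\|\overline{r}_{f}(t,\cdot)\|_{C^{2,\alpha}(S^{1})};
\]
applying Theorem~\ref{thm3.9}(b) once more yields the bound $o\!\left(\tfrac1t\right)^{\lambda}$ for every $\lambda\in\left(0,1+\tfrac{n_{0}}{2}\right)$. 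For $T_{0}\leq t\leq T_{1}$ both estimates are trivial, since $t$ then ranges over a compact set and the relevant quantities are finite.

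The argument is essentially bookkeeping, and the only mildly delicate point is the one already flagged: one must know that $\overline{r}_{f}(t,\cdot)$, which is defined through the conformal map $f(\cdot,t)$, is a genuine $C^{2,\alpha}$ function of the polar angle $\theta$ so that the polar curvature formula applies and the $C^{2,\alpha}$ decay of Theorem~\ref{thm3.9}(b) transfers to $\kappa-1$. This is exactly where strong starlikeness of $f(\cdot,t)$ for $t\geq T_{0}$ (guaranteed by Theorem~\ref{thm3.9}(a)) enters, since it makes $\theta\mapsto\xi$ a diffeomorphism of $S^{1}$; the remainder is the Lipschitz bound for $\Phi$ near the origin and a direct appeal to the decay rate already established.
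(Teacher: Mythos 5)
Your proposal is correct and follows essentially the same route as the paper: both arguments identify $\partial\Omega'(t)$ as the polar graph of $1+\overline{r}_{f}(t,\theta)$, read the radius bound directly off $\|\overline{r}_{f}(t,\cdot)\|_{C^{0}}$, and obtain the curvature bound from the polar curvature formula, which gives $|\kappa-1|=O(\|\overline{r}_{f}\|_{C^{2}(S^{1})})$, combined with the decay in Theorem~\ref{thm3.9}(b). Your version merely spells out the Lipschitz estimate for the curvature expression near the round circle, which the paper leaves implicit.
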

\begin{proof}
 Let $f(\xi,t)$ be a global strong solution satisfies Theorem~\ref{thm3.9}. There exists $T_{0}>0$ such that $\overline{r}_{f}(t,\theta), t\geq T_{0}$ is well-defined. The value $\left|\kappa(t,z)-1\right|$ satisfies
\begin{equation}
\label{curvature1}
|\kappa-1|=\left|\frac{\left(1+\overline{r}_{f}\right)^2+2\left(\overline{r}_{f}^{'}\right)^2-\overline{r}_{f}^{''}\left(1+\overline{r}_{f}\right)}{\left[\left(1+\overline{r}_{f}\right)^2+\left(\overline{r}_{f}^{'}\right)^2\right]^{\frac{3}{2}}}-1\right|=O(\|\overline{r}_{f}\|_{C^{2}(S^{1})})
\end{equation} 
as $\|\overline{r}_{f}\|_{C^{2}}$ approaches $0$. Since $\|\overline{r}_{f}\|_{C^{2,\alpha}(S^{1})}=o(\frac{1}{t})^{\lambda}, \forall \lambda\in (0,1+\frac{n_{0}}{2})$
 by the results in Theorem~\ref{thm3.9},  we can obtain from (\ref{curvature1}) that 
 \[\max_{z\in\Omega^{'}(t)}|\kappa(t,z)-1|=o\left(\frac{1}{t}\right)^{\lambda},\quad\forall \lambda\in \left(0,1+\frac{n_{0}}{2}\right).\]
 Similarly, since $\|\overline{r}_{f}\|_{C^{2,\alpha}(S^{1})}=o(\frac{1}{t})^{\lambda}, \forall \lambda\in (0,1+\frac{n_{0}}{2})$ by the results in Theorem~\ref{thm3.9},  we can obtain that radius satisfies
\[\max_{z\in\partial\Omega^{'}(t)}|\left| z\right|-1|=o\left(\frac{1}{t}\right)^{\lambda},\quad\forall \lambda\in \left(0,1+\frac{n_{0}}{2}\right).\]
\end{proof}
\section{Existence and uniqueness proof of the P-G equation}
\label{sec6}
In this section, we assume the short-time well-posedness of strong* polynomial solutions as shown in Gustfasson~\cite{gustaf1} and we give a shorter proof of short-time well-posedness  for strong solutions in the injection case. Especially, the proof of short-time existence  of strong solutions is  an application of Theorem~\ref{Main Lemma}  and this proof implies that every strong solution can be approximated by many strong* polynomial solutions locally in time. The uniqueness proof is given separately.
\subsection{Existence}
\begin{thm}
\label{thm2.6}
Given $f(\xi,0)\in \omega(\overline{D_{r}})\cap H(\overline{D_{\rho_{0}}})$ where $\rho_{0}>r>1$, then there exist $t_{0}>0$ and a strong* solution  to (\ref{PG2}) $f(\xi,t)\in C^{1}([0,t_{0}],H(D_{r}))\cap \omega(D_{r})$ with the intial value $f(\xi,0)$.
\end{thm}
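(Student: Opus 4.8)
The plan is to exhibit the given locally univalent initial function as a perturbation, small in the norm $\|\cdot\|_{\rho,1}$, of a polynomial initial function to which the short-time theory of Gustafsson~\cite{gustaf1} applies, and then to invoke Theorem~\ref{Main Lemma}. Write $f(\xi,0)=\sum_{j=1}^{\infty}c_{j}\xi^{j}$ (there is no constant term since $f(0,0)=0$, and $c_{1}=f^{'}(0,0)>0$ since $f(\xi,0)\in\omega(\overline{D_{r}})$), and set $m:=\min_{\overline{D_{r}}}|f^{'}(\xi,0)|>0$. Fix $\rho$ with $r<\rho<\rho_{0}$. Because $f(\cdot,0)$ is analytic on $\overline{D_{\rho_{0}}}$, the Cauchy estimates give $|c_{j}|\leq M\rho_{0}^{-j}$ with $M=\max_{|\xi|=\rho_{0}}|f(\xi,0)|$; hence, if $P_{N}(\xi)=\sum_{j=1}^{N}c_{j}\xi^{j}$ denotes the degree $\leq N$ Taylor polynomial, then $\|f(\cdot,0)-P_{N}\|_{\rho,1}=\sum_{j>N}|c_{j}|\rho^{j}j^{3/2}\leq M\sum_{j>N}(\rho/\rho_{0})^{j}j^{3/2}\to 0$ exponentially fast in $N$, and similarly $P_{N}^{'}\to f^{'}(\cdot,0)$ uniformly on $\overline{D_{r}}$.

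First I would fix $N$ large enough that both $\min_{\overline{D_{r}}}|P_{N}^{'}|\geq m/2$ and $\|f(\cdot,0)-P_{N}\|_{\rho,1}\leq m/(8\sqrt{N})$; both hold for large $N$, the first by uniform convergence of the derivatives, the second because an exponentially small quantity eventually falls below the threshold $m/(8\sqrt{N})$, which decays only polynomially. Since $P_{N}(0)=0$, $P_{N}^{'}(0)=c_{1}>0$ and $|P_{N}^{'}|\geq m/2>0$ on $\overline{D_{r}}$, we have $P_{N}\in\omega(\overline{D_{r}})$, so by Lemma~\ref{Lemma2.4} there is a strong* polynomial solution $f_{N}(\xi,t)\in\omega(\overline{D_{r}})$ of (\ref{PG2}) with $f_{N}(\xi,0)=P_{N}(\xi)$, defined on some interval $[0,t_{1}]$ with $t_{1}>0$; as a strong* solution it lies in $C^{1}([0,t_{1}],H(\overline{D_{r}}))$, and shrinking $t_{1}$ if necessary, continuity in $t$ yields $\min_{(\overline{D_{r}},[0,t_{1}])}|f_{N}^{'}|\geq\frac12\min_{\overline{D_{r}}}|P_{N}^{'}|\geq m/4$.

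Finally I would apply Theorem~\ref{Main Lemma} with $k_{0}=N$, the above $t_{1}$, $l=1/2$, and the chosen $\rho>r$. Hypothesis (A) is satisfied because
\[\|f(\cdot,0)-f_{N}(\cdot,0)\|_{\rho,1}=\|f(\cdot,0)-P_{N}\|_{\rho,1}\leq\frac{m}{8\sqrt{N}}\leq\frac{l}{\sqrt{N}}\min_{(\overline{D_{r}},[0,t_{1}])}|f_{N}^{'}|,\]
and $f^{'}(0,0)\in R$, $f(0,0)=0$. Theorem~\ref{Main Lemma}(a) then produces a strong* solution $f(\xi,t)\in C^{1}([0,t_{0}],H(D_{r})\cap C(\overline{D_{r}}))\cap\omega(D_{r})$ of (\ref{PG2}) with the prescribed initial value $f(\xi,0)$, where $t_{0}=\min\{\frac{1}{CN}\ln(\rho/r),\,t_{1}\}>0$; this is exactly the statement of Theorem~\ref{thm2.6}. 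The one delicate point is the trade-off in the choice of $N$: Theorem~\ref{Main Lemma} permits perturbations only of size $O(N^{-1/2})$, so one genuinely needs the initial data to be analytic on a disk $\overline{D_{\rho_{0}}}$ strictly larger than $\overline{D_{\rho}}$ (with $\rho>r$) so that the Taylor tail $\|f(\cdot,0)-P_{N}\|_{\rho,1}$ decays fast enough; this is where the hypothesis $\rho_{0}>r$ is used decisively.
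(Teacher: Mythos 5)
Your proposal is correct and follows essentially the same route as the paper: truncate the Taylor expansion of $f(\cdot,0)$, choose the truncation degree large enough that the polynomial is in $\omega(\overline{D_{r}})$ and the tail in $\|\cdot\|_{\rho,1}$ (with $r<\rho<\rho_{0}$) falls below the $O(N^{-1/2})$ threshold, run the polynomial strong* solution on a short interval with $\min|f_{N}^{'}|$ bounded below, and invoke Theorem~\ref{Main Lemma} with $l=\frac{1}{2}$. The quantitative bookkeeping ($m/4$ lower bound, $m/(8\sqrt{N})$ smallness, exponential tail versus polynomial threshold) matches the paper's inequalities (\ref{key1})--(\ref{y4}) step for step.
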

\begin{proof}
(a).For $f(\xi,0)=\sum_{i=1}^{\infty}a_{i}(0)\xi^{i}\in H(\overline{D_{\rho_{0}}})$, there exists $M>0$ such that
\[\mid a_{i}(0)\mid\leq M\rho_{0}^{-i}.\]
Define $f_{n}(\xi,0)=\sum_{i=1}^{n}a_{i}(0)\xi^{i}.$
Then
\[\left|\min_{\overline{D_{r}}}\left| f^{'}(\cdot,0)\right|-\min_{\overline{D_{r}}}\left| f_{n}^{'}(\cdot,0)\right|\right|\leq\sum_{i=n+1}^{\infty}i\mid a_{i}(0)\mid(r)^{i}\leq\sum_{i=n+1}^{\infty}iM(\frac{\rho_{0}}{r})^{-i}\]
where $\sum_{i=n+1}^{\infty}iM(\frac{\rho_{0}}{r})^{-i}$ approaches zero as $n$ approaches $\infty$. Therefore
there exists $n_{0}\in N$ such that 
\[\frac{1}{2}\min_{\overline{D_{r}}}\left| f^{'}(\cdot,0)\right|\leq\min_{\overline{D_{r}}}\left| f_{n}^{'}(\cdot,0)\right|, \quad n\geq n_{0}\]
and $f_{n}(\xi,0)\in \omega(\overline{D_{r}})$. By Gustafsson~\cite{gustaf1}, there exists a strong* polynomial solution $f_{n}(\xi,t)\in\omega(\overline{D_{r}})$ at least for a short time.\\
(b).Given $1<r_{0}< \frac{\rho_{0}}{r}$, there exists $k_{0}\geq n_{0}$ such that 
\begin{equation}
\label{key1}
\sum_{k=k_{0}+1}^{\infty}|a_{k}(0)|\left(\frac{\rho_{0}}{r_{0}}\right)^{k}k^{3/2}\leq \frac{1}{\sqrt{k_{0}}}\frac{1}{8}\min_{\overline{D_{r}}}\left| f^{'}(\cdot,0)\right|
\end{equation}
\\
(c).There exists $t_{1}>0$ such that the strong* solution to (\ref{PG2}) $f_{k_{0}}(\xi,t)$ exists for $0\leq t\leq t_{1}$ and
\[\min_{(\overline{D_{r}},[0,t_{1}])}\left| f_{k_{0}}^{'}\right|\geq\frac{1}{4}\min_{\overline{D_{r}}}\left| f^{'}(\cdot,0)\right|.\]
By the above, $(\ref{key1})$ implies
\begin{equation}
\label{y4}
\sum_{k=k_{0}+1}^{\infty}\left| a_{k}(0)\right| \left(\frac{\rho_{0}}{r_{0}}\right)^{k}k^{3/2}\leq\frac{1}{\sqrt{k_{0}}}\frac{1}{2}\min_{(\overline{D_{r}},[0,t_{1}])}\left| f_{k_{0}}^{'}\right|.
\end{equation}
The inequality (\ref{y4}) implies that
\[\left\|f(\cdot,0)-f_{k_{0}}(\cdot,0)\right\|_{\frac{\rho_{0}}{r_{0}},1}\leq\frac{1}{\sqrt{k_{0}}}\frac{1}{2}\min_{(\overline{D_{r}},[0,t_{1}])}\left| f_{k_{0}}^{'}\right|.\]
(d).By letting $\rho=\frac{\rho_{0}}{r_{0}}$ and $l=\frac{1}{2}$, we can see that assumption (A) in Theorem~\ref{Main Lemma} is satisfied from (c). By applying Theorem~\ref{Main Lemma}, the short-time existence is proven.
\end{proof}
\begin{rem}The proof also can be applied to the suction case.\end{rem}
If we assume $f(\xi,0)$ is univalent, then $f(\xi,t)$ we obtained in Theorem~\ref{thm2.6} is also univalent in short time. Therefore, we obtain the following results.
\begin{thm}
\label{thm2.7}
Given $f(\xi,0)\in O(\overline{D_{r}})\cap H(\overline{D_{\rho_{0}}})$ where $\rho_{0}>r>1$, then for $1<r'<r$, there exists $b>0$ and a strong  solution  to (\ref{PG1}) $f(\xi,t)\in C^{1}([0,b],H(\overline{D_{r'}}))\cap O(\overline{D_{r'}})$ with the intial value $f(\xi,0)$.
\end{thm}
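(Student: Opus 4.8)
The plan is to deduce Theorem~\ref{thm2.7} from the existence result for strong* solutions, Theorem~\ref{thm2.6}, together with the univalence-stability Lemma~\ref{univalence} and the elementary observation (recorded just after the definition of strong* solution) that a univalent strong* solution to (\ref{PG2}) is automatically a strong solution to (\ref{PG1}). Thus it suffices to produce a strong* solution on a short time interval and to check that it stays univalent there.

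First I would fix two intermediate radii $r'<r_{1}<r_{2}<r$. Since $f(\xi,0)\in O(\overline{D_{r}})\cap H(\overline{D_{\rho_{0}}})$ with $\rho_{0}>r>r_{2}>1$, we have in particular $f(\xi,0)\in\omega(\overline{D_{r_{2}}})\cap H(\overline{D_{\rho_{0}}})$, so Theorem~\ref{thm2.6}, applied with $r_{2}$ in place of $r$, furnishes $t_{0}>0$ and a strong* solution $f(\xi,t)\in C^{1}([0,t_{0}],H(D_{r_{2}}))\cap\omega(D_{r_{2}})$ to (\ref{PG2}) with initial value $f(\xi,0)$. Because $\overline{D_{r_{1}}}$ is a compact subset of $D_{r_{2}}$, this solution restricts to a map in $C^{1}([0,t_{0}],H(\overline{D_{r_{1}}})\cap C(\overline{D_{r_{1}}}))$ that is locally univalent on $\overline{D_{r_{1}}}$; moreover, since the solution of Theorem~\ref{thm2.6} is obtained in the proof of Theorem~\ref{Main Lemma} as a uniform (in the $|\cdot|_{M(r_{2})}$-norm) limit of polynomial strong* solutions whose coefficients depend continuously on $t$, the map $t\mapsto f^{'}(\cdot,t)$ is continuous into the $|\cdot|_{M(r_{1})}$-norm.

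Next I would apply Lemma~\ref{univalence} to the \emph{time-independent} comparison function $g(\xi,t):=f(\xi,0)$, which belongs to $C^{1}([0,t_{0}],H(\overline{D_{r_{1}}}))\cap O(\overline{D_{r_{1}}})$ because $f(\cdot,0)\in O(\overline{D_{r}})\subset O(\overline{D_{r_{1}}})$, taking the radii $r_{1}$ and $r'$. This yields $\eta(g,t_{0},r')>0$ such that any map in $C([0,t_{0}],H(D_{r_{1}})\cap C(\overline{D_{r_{1}}}))$ whose derivative stays within $\eta$ of $g^{'}$ in the $|\cdot|_{M(r_{1})}$-norm over $[0,t_{0}]$ is univalent on $\overline{D_{r'}}$ at every time. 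Since $f(\cdot,0)=g(\cdot,0)$ and $t\mapsto f^{'}(\cdot,t)$ is $|\cdot|_{M(r_{1})}$-continuous, there is $b\in(0,t_{0}]$ with $\max_{[0,b]}\left|f^{'}(\cdot,t)-g^{'}(\cdot,t)\right|_{M(r_{1})}\le\eta$, hence $f(\xi,t)\in O(\overline{D_{r'}})$ for $0\le t\le b$. Being a univalent strong* solution to (\ref{PG2}), $f$ is then a strong solution to (\ref{PG1}) on $[0,b]$, and since $\overline{D_{r'}}\subset D_{r_{2}}$ the regularity $f\in C^{1}([0,b],H(\overline{D_{r'}}))\cap O(\overline{D_{r'}})$ is inherited from that produced by Theorem~\ref{thm2.6}.

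The argument is short and is essentially a bookkeeping combination of results already established; the only point that needs care is the choice of radii so that simultaneously (i) Theorem~\ref{thm2.6} applies (one needs $\rho_{0}>r_{2}$), (ii) the strong* solution enjoys the boundary regularity and the $|\cdot|_{M(r_{1})}$-continuity in time demanded by Lemma~\ref{univalence} (gained by passing from $r_{2}$ down to $r_{1}$), and (iii) the final univalence is asserted on $\overline{D_{r'}}$ with $r'<r_{1}$. I do not expect any genuine analytic obstacle beyond this arrangement of constants.
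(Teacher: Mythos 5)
Your proposal is correct and follows essentially the same route as the paper, which deduces Theorem~\ref{thm2.7} from Theorem~\ref{thm2.6} by noting that the strong* solution remains univalent for a short time (and that a univalent strong* solution to (\ref{PG2}) is a strong solution to (\ref{PG1})). The paper leaves the persistence of univalence as a one-line remark, and your use of Lemma~\ref{univalence} with the time-independent comparison $g(\xi,t)=f(\xi,0)$, together with the careful choice of intermediate radii, simply fills in that detail correctly.
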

Since for a given $f(\xi,0)\in O(\overline{D})$, there exist $1<r<\rho_{0}$ such that $f(\xi,0)\in H(\overline{D_{\rho_{0}}})\cap O(\overline{D_{r}})$, Theorem~\ref{thm2.7} implies the following directly:
\begin{thm}
Given $f(\xi,0)\in O(\overline{D})$, there exists a strong solution to (\ref{PG1}) $f(\xi,t)$ locally in time. 
\end{thm}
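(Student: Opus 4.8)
The plan is to read this statement off Theorem~\ref{thm2.7}; essentially all the analytic content is already packaged there (and, behind it, in Theorem~\ref{Main Lemma}), so the only thing to supply is that an arbitrary $f(\cdot,0)\in O(\overline{D})$ automatically lies in $O(\overline{D_r})\cap H(\overline{D_{\rho_0}})$ for a suitable pair of radii $1<r<\rho_0$, which is precisely the hypothesis of that theorem. \emph{Step 1 (enlarging the domain of analyticity).} By definition, $f(\cdot,0)\in H(\overline{D})$ means $f(\cdot,0)$ is analytic on some open neighborhood of $\overline{D}$; shrinking this neighborhood I may assume it contains a closed disk $\overline{D_{\rho_0}}$ with $\rho_0>1$, so that $f(\cdot,0)\in H(\overline{D_{\rho_0}})$. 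Moreover, univalence of $f(\cdot,0)$ on that neighborhood forces $f'(\cdot,0)\neq 0$ there, in particular on $\overline{D}$.

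\emph{Step 2 (enlarging the domain of univalence).} Next I would show $f(\cdot,0)$ is univalent on $\overline{D_r}$ for some $r$ slightly larger than $1$, by a compactness argument. If this failed, there would be radii $r_k\downarrow 1$ and points $z_k\neq w_k$ in $\overline{D_{r_k}}$ with $f(z_k,0)=f(w_k,0)$; passing to a subsequence, $z_k\to z$ and $w_k\to w$ in $\overline{D}$ with $f(z,0)=f(w,0)$. If $z\neq w$ this contradicts univalence of $f(\cdot,0)$ on $\overline{D}$; if $z=w$, then $f'(z,0)\neq 0$ (from Step 1) makes $f(\cdot,0)$ injective on a neighborhood of $z$, contradicting $z_k\neq w_k$ and $f(z_k,0)=f(w_k,0)$ for $k$ large. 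Hence such an $r$ exists, and shrinking it if necessary we may take $1<r<\rho_0$; since $f(0,0)=0$ and $f'(0,0)>0$ are inherited, we get $f(\cdot,0)\in O(\overline{D_r})\cap H(\overline{D_{\rho_0}})$.

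\emph{Step 3 (invoking Theorem~\ref{thm2.7}).} With $r,\rho_0$ as above and any $1<r'<r$, Theorem~\ref{thm2.7} provides $b>0$ and a strong solution $f(\xi,t)\in C^{1}([0,b],H(\overline{D_{r'}}))\cap O(\overline{D_{r'}})$ to (\ref{PG1}) with initial value $f(\xi,0)$. Since $\overline{D_{r'}}$ is a neighborhood of $\overline{D}$, this $f$ is univalent and analytic in a neighborhood of $\overline{D}$, satisfies $f(0,t)=0$ and $f'(0,t)>0$, and is continuously differentiable in $t$ on $[0,b)$ — i.e.\ it is a strong solution in the sense defined in the introduction — which proves the statement. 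The only mildly non-routine point in the whole argument is the propagation-of-univalence step, and even that is a standard continuity/normal-families argument resting on $f'(\cdot,0)\neq 0$ on $\overline{D}$; so I do not anticipate a genuine obstacle.
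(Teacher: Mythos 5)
Your proposal is correct and follows the paper's own route: the paper proves this statement in one line by observing that any $f(\cdot,0)\in O(\overline{D})$ lies in $H(\overline{D_{\rho_0}})\cap O(\overline{D_r})$ for some $1<r<\rho_0$ and then citing Theorem~\ref{thm2.7}. The only difference is that you spell out the standard enlargement of the radii of analyticity and univalence (your Steps 1--2), which the paper treats as immediate.
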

\subsection{Uniqueness}
\begin{thm}\label{uniqueness} Strong solutions to (\ref{PG1}) are unique.\end{thm}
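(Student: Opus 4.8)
The plan is to treat (\ref{PG1}) as a well-posed problem in the analytic category: the Polubarinova--Galin flow loses exactly one $\xi$-derivative, so an abstract Cauchy--Kovalevskaya (Ovsyannikov-type) argument applies in the scale of spaces graded by the norms $|\cdot|_{M(\rho)}$, even though the flow is ill-posed in Sobolev spaces. Suppose $f_1,f_2$ are strong solutions to (\ref{PG1}) on $[0,b)$ with $f_1(\cdot,0)=f_2(\cdot,0)$, and set $\Delta=f_1-f_2$. Since it suffices to show $\Delta\equiv 0$ on each $[0,T_0]$ with $T_0<b$, I would fix such a $T_0$ and, using the continuity in $t$ in the definition of a strong solution together with Cauchy estimates, produce radii $1<R'<R$ with $f_1,f_2\in C^1([0,T_0],H(\overline{D_R}))$, with $f_j'$ nonvanishing on $\overline{D_R}$, and with uniform bounds $\sup_{[0,T_0]}|f_j'|_{M(R)}\le B_1$, $\inf_{[0,T_0]\times\overline{D_R}}|f_j'|\ge m>0$, $\sup_{[0,T_0]}|\Delta|_{M(R')}\le B$. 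The objective is then to prove $|\Delta(\cdot,t)|_{M(\rho)}=0$ for $1\le\rho\le R'$ and $t$ in a subinterval of $[0,T_0]$ whose length does not degenerate as $T_0$ grows.

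Subtracting the two copies of the reformulation (\ref{PG2}) and rearranging exactly as in (\ref{B1}) yields $\partial_t\Delta=\xi\Delta'\,P[1/|f_2'|^2]+\xi f_1'\,P[1/|f_1'|^2-1/|f_2'|^2]$, an identity which by (\ref{enlarge1}) propagates from $D$ to all of $D_R$. The key estimate I would establish is
\[
\left|\partial_t\Delta(\cdot,t)\right|_{M(\rho)}\le \frac{C}{\rho'-\rho}\,\left|\Delta(\cdot,t)\right|_{M(\rho')},\qquad 1\le\rho<\rho'\le R',
\]
with $C=C(B_1,m,R',R)$ independent of $t$. Since $|\cdot|_{M(\rho)}$ is a Banach-algebra norm and $|P[1/|f_2'|^2]|_{M(\rho)}$ is bounded uniformly in $\rho$ and $t$ (because, on $\partial D$, $1/|f_2'|^2$ is the restriction of $1/(f_2'(z)\overline{f_2'}(1/z))$, holomorphic in a fixed annulus about $\partial D$ with controlled Laurent norm, and $P$ only keeps the zero mode and doubles the positive Fourier modes), the first term is $\le C|\Delta'|_{M(\rho)}\le \frac{CR'}{e(\rho'-\rho)}|\Delta|_{M(\rho')}$ via the Cauchy-type bound $|\Delta'|_{M(\rho)}\le \frac{R'}{e(\rho'-\rho)}|\Delta|_{M(\rho')}$. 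The delicate term is the second one: on $\partial D$ one has the algebraic identity $1/|f_1'|^2-1/|f_2'|^2=-A(z)\Delta'(z)-B(z)\overline{\Delta'}(1/z)$ with $A=1/(f_1'(z)\overline{f_1'}(1/z)f_2'(z))$ and $B=1/(\overline{f_1'}(1/z)f_2'(z)\overline{f_2'}(1/z))$ holomorphic in a fixed annulus about $\partial D$ with Laurent norms controlled by $B_1,m$; multiplying these Laurent series against the Taylor series of $\Delta'(z)$ and of $\overline{\Delta'}(1/z)$ and applying $P$ gives $|P[1/|f_1'|^2-1/|f_2'|^2]|_{M(\rho)}\le C|\Delta'|_{M(\rho)}$, so this term costs only the single $\xi$-derivative already carried by $\Delta'$ and no extra factor $(\rho'-\rho)^{-1}$.

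Granting the key estimate, uniqueness follows by the standard Ovsyannikov iteration. Fix $1\le\rho<R'$ and $m\in N$, take the chain $\rho_j=\rho+j(R'-\rho)/m$, integrate $\partial_t\Delta$ in time from $\Delta(\cdot,0)=0$, and apply the estimate along the chain $m$ times to obtain $|\Delta(\cdot,t)|_{M(\rho)}\le\big(Cm/(R'-\rho)\big)^m\,(t^m/m!)\,\sup_{[0,t]}|\Delta|_{M(R')}\le\big(Cet/(R'-\rho)\big)^m B$. Letting $m\to\infty$ forces $\Delta(\cdot,t)=0$ whenever $t<(R'-\rho)/(Ce)$; taking $\rho=(1+R')/2$ gives a fixed $t_*>0$, depending only on $R'$ and $C$, with $f_1\equiv f_2$ on $[0,t_*]$. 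Because $R'$ and $C$ were chosen uniformly over $[0,T_0]$, repeating the argument started at $t_*,2t_*,\dots$ covers $[0,T_0]$ in finitely many steps, and since $T_0<b$ was arbitrary, $f_1\equiv f_2$ on $[0,b)$.

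The main obstacle I anticipate is the \emph{sharpness} of the key estimate, i.e. getting the power $(\rho'-\rho)^{-1}$ and not $(\rho'-\rho)^{-2}$: a crude bound on $|P[1/|f_1'|^2-1/|f_2'|^2]|_{M(\rho)}$ through sup-norms on an intermediate circle picks up a second factor $(\rho'-\rho)^{-1}$, after which the simplest iteration diverges, so one genuinely needs the algebraic rewriting of the boundary density above together with the fact that $P$ acts boundedly on the $M(\rho)$-scale with no loss. A secondary technical point is the reduction step — extracting the $t$-uniform strip of analyticity $\overline{D_R}\times[0,T_0]$ — which should follow from the definition of a strong solution plus compactness (or from the smoothing effect of the injection-driven flow).
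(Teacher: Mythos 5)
Your proposal is sound, but it follows a genuinely different route from the paper. The paper also starts from the decomposition you use (its identity (\ref{B1})) and also exploits analyticity of both solutions in a common $\overline{D_{r'}}$, $r'>1$, but instead of a scale of norms it works with the Taylor coefficients directly in an $L^2$ setting: it bounds $\|\partial_t(f_1-f_2)\|_{L^2(\partial D)}$ by $C\|f_1'-f_2'\|_{L^2(\partial D)}$ via (\ref{calp}), truncates to $D_k(t)=\sum_{i\le k}i^2|\alpha_i-\beta_i|^2$, controls the discarded tail by the a priori geometric decay $|\alpha_i|,|\beta_i|\lesssim (r')^{-i}$ coming from analyticity on $\overline{D_{r'}}$, runs Gr\"onwall to get $D_k(t)\lesssim e^{2Ckt}(r')^{-k}k^{3/2}$, and lets $k\to\infty$, which kills $D_k(t)$ exactly for $t<\frac{1}{2C}\ln r'$; it then stops at this short-time uniqueness statement. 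Your Ovsyannikov scheme replaces the truncation-plus-tail trick by the nested radii $\rho_j$ and the factorial gain $t^m/m!$, with the derivative loss absorbed by the Cauchy bound $|\Delta'|_{M(\rho)}\le \tfrac{R'}{e(\rho'-\rho)}|\Delta|_{M(\rho')}$; your key point — that the second term costs only the derivative already present in $\Delta'$, because $A,B$ are holomorphic in a fixed annulus and $P$ acts on the $M(\rho)$ scale with uniformly bounded norm for $1\le\rho\le R'<R$ — is correct and plays the same role as the paper's uniform constant in (\ref{calp}). What each buys: the paper's argument is more elementary (one family of truncated sums, one ODE inequality, no scale of spaces) and gives an explicit uniqueness time $\tfrac{1}{2C}\ln r'$ tied to the analyticity radius; yours is the standard abstract Cauchy–Kovalevskaya uniqueness, conceptually cleaner, and you additionally carry the conclusion to the whole existence interval by time-stepping, which the paper leaves implicit (its proof ends with short-time uniqueness). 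The one caveat in your write-up is the reduction step: extracting a $t$-uniform annulus of analyticity with uniform upper/lower bounds on $f_j'$ over an arbitrary compact $[0,T_0]$ does not follow immediately from the definition of a strong solution (the neighborhood of $\overline D$ may depend on $t$, and lower semicontinuity of the analyticity radius needs an argument); note, however, that the paper makes the analogous assertion for a short interval without proof, and you could avoid needing the uniform strip altogether by proving only local uniqueness near each time and concluding with the usual connectedness/continuation argument on $\{t: f_1=f_2 \text{ on } [0,t]\}$.
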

\begin{proof}
(1) Let $f(\xi,0)\in O(\overline{D})$. Assume there are two strong solutions $f_{1}, f_{2}$  with the same initial value $f(\xi,0)$. There exist $1<r'$ and $b>0$ such that $f_{i}(\xi,t)\in O(\overline{D_{r'}})$  for $0\leq t\leq b$ and $f_{i}(\xi,t)$ is continuous in $(\overline{D_{r'}},[0,b])$ for $1\leq i\leq 2$. Denote 
\[M^2=\max_{i=1,2}\max_{t\in[0,b]}\int_{\partial D_{r'}}\left| f_{i}^{'}\right|^2d\theta\]then
\[\left| \alpha_{i}(t)\right|\leq \frac{M}{i}(r')^{-i},\quad\left| \beta_{i}(t)\right|\leq \frac{M}{i}\left(r'\right)^{-i}\]
if we denote $f_{1}(\xi,t)=\sum_{i=1}^{\infty}\alpha_{i}(t)\xi^i$ and $f_{2}(\xi,t)=\sum_{i=1}^{\infty}\beta_{i}(t)\xi^i$.\\
(2)By (\ref{calp}), 
\begin{align}
\label{qq}
&\left\|\frac{d}{dt}[f_{1}-f_{2}]\right\|_{L^{2}([0,2\pi])}\notag\\
\leq&\left\{\max_{\partial D}\left|P\left[\frac{1}{\left| f_{2}^{'}\right|^2}\right]\right|+C_{2}\max_{\partial D}\left| f_{1}^{'}\right|\max_{\partial D}\frac{|f_{1}^{'}|+|f_{2}^{'}|}{|f_{1}^{'}|^2 |f_{2}^{'}|^2}\right\}\left\|f_{1}^{'}-f_{2}^{'}\right\|_{L^{2}([0,2\pi])}.
\end{align} 
Therefore, by (\ref{qq}), there exists $C>0$, for $t\in[0,b]$
\begin{align}       
&\sum_{i=1}^{\infty}\left[\left|(\alpha_{i}-\beta_{i})_{t}\right|\right]^2\notag\\
\leq& C\left\{\sum_{i=1}^{\infty}\left[\left|(\alpha_{i}-\beta_{i})\right| i\right]^2\right\}\notag\\
\leq &C\left\{\sum_{i=1}^{k}\left[\left|(\alpha_{i}-\beta_{i})\right| i\right]^2+\sum_{i=k+1}^{\infty}(2M)^2\left(r'\right)^{-2i}\right\}\notag\\
\leq &C\left\{\sum_{i=1}^{k}\left[\mid(\alpha_{i}-\beta_{i})\mid i\right]^2+4M^2\left(\frac{\left(r'\right)^{-2(k+1)}}{1-\left(r'\right)^{-2}}\right)\right\}\notag.
\end{align} 
(3)Denote  $D_{k}(t)=\sum_{i=1}^{k}[\mid(\alpha_{i}-\beta_{i})\mid i]^2$, then
\begin{align} 
 D_{k}^{'}(t)&=\sum_{i=1}^{k}2Re\left[(\alpha_{i}-\beta_{i})\overline{(\alpha_{i}-\beta_{i})_{t}}\right]i^2\notag\\
&\leq 2k\left\{\sum_{i=1}^{k}\left[\left|(\alpha_{i}-\beta_{i})\right| i\right]^2\right\}^{1/2}\left\{\sum_{i=1}^{k}\left[\left|(\alpha_{i}-\beta_{i})_{t}\right|\right]^2\right\}^{1/2}\notag\\
&\leq 2kCD_{k}^{1/2}(t)\left\{D_{k}(t)+4M^2\left(\frac{\left(r'\right)^{-2(k+1)}}{1-\left(r'\right)^{-2}}\right)\right\}^{1/2}\notag\\
&\leq 2kCD_{k}^{1/2}(t)\left\{D_{k}^{1/2}(t)+2M\left(\frac{\left(r'\right)^{-(k+1)}}{(1-\left(r'\right)^{-2})^{1/2}}\right)\right\}\notag\\
&\leq 2kCD_{k}(t)+4kMCD_{k}^{1/2}(t)\left(\frac{\left(r'\right)^{-(k+1)}}{(1-\left(r'\right)^{-2})^{1/2}}\right).\notag
\end{align}
Note that $|\Omega(t)|=\pi\sum_{i=1}^{\infty}i\mid \alpha_{i}(t)\mid^2=\pi\sum_{i=1}^{\infty}i\mid \beta_{i}(t)\mid^2\leq |\Omega(0)|+2\pi b$ where $|\Omega(t)|$ is the area of the moving domain at time $t$. So we have $D_{k}(t)\leq \frac{1}{\pi}4k|\Omega(t)|\leq \frac{1}{\pi}4k(|\Omega(0)|+2\pi b)=2kA$ for some $A>0$. Therefore
\[D_{k}^{'}(t)\leq 2kCD_{k}(t)+4MC(2A)^{1/2}k^{3/2}\frac{\left(r'\right)^{-(k+1)}}{(1-\left(r'\right)^{-2})^{1/2}}.\]
Denote $(2A)^{1/2}(4MC)\frac{1}{(1-\left(r'\right)^{-2})^{1/2}}=C_{0}$, then
\[D_{k}^{'}(t)\leq 2kCD_{k}(t)+C_{0}\left(r'\right)^{-(k+1)}k^{3/2}\]
\[\left(D_{k}(t)e^{-2kCt}\right)^{'}\leq e^{-2kCt}C_{0}\left(r'\right)^{-(k+1)}k^{3/2}\]
\[D_{k}(t)e^{-2Ckt}\leq \frac{1-e^{-2kCt}}{2kC}C_{0}\left(r'\right)^{-(k+1)}k^{3/2}\]
\begin{equation}
\label{fun}
D_{k}(t)\leq\frac{1}{2kC}\left(e^{2kCt}\right)C_{0}\left(r'\right)^{-(k+1)}k^{3/2}=\frac{1}{2r'C}\left(e^{2Ct}\left(r'\right)^{-1}\right)^{k}k^{\frac{1}{2}}C_{0}.
\end{equation}
For $0\leq t<\frac{1}{2C}\ln r'$, in (\ref{fun}) we let $k$ approach $\infty$, then $D_{k}(t)$ approaches zero since $\frac{1}{2C}(e^{2Ct}\left(r'\right)^{-1})^{k}k^{\frac{1}{2}}C_{0}$ approaches zero. Therefore $f_{1}(\xi,t)=f_{2}(\xi,t)$ for $t\in[0,T)$ where $T=\min\{\frac{1}{2C}\ln r',b\}$.\\
(4)Hence, the uniqueness of the short-time existence is proven.

\end{proof}

\pagebreak

\section*{Acknowledgements}
The author is indebted to her advisor, Govind Menon, for many things, including his constant guidance and important opinions. This material is based upon work supported by the National Science
Foundation under grant nos. DMS 06-05006 and DMS 07-48482.


\bibliography{main0}

\end{document}